\def\C{\mathcal{C}}
\def\D{\mathcal{D}}
\def\M{\mathcal{M}}
\def\S{\mathcal{S}}
\def\E{\mathcal{E}}
\def\G{\mathcal{G}}
\def\u{\mathfrak{u}}
\def\<{\langle}
\def\>{\rangle}
\definecolor{red}{rgb}{0.9,0,0}
\definecolor{green}{rgb}{0,0.8,0}
\definecolor{blue}{rgb}{0,0,0.8}
\definecolor{cautionred}{rgb}{1.0,0,0}
\newcommand{\bea}{\begin{eqnarray}}
\newcommand{\eea}{\end{eqnarray}}
\newcommand{\beann}{\begin{eqnarray*}}
\newcommand{\eeann}{\end{eqnarray*}}
\newcommand{\beq}{\begin{equation}}
\newcommand{\eeq}{\end{equation}}
\newcommand{\ba}{\begin{array}}
\newcommand{\ea}{\end{array}}
\newcommand{\ben}{\begin{enumerate}}
\newcommand{\een}{\end{enumerate}}
\theoremstyle{definition}
 \author{
  James Nguyen, Nicholas J. Teh, and Laura Wells}
\begin{document}

\title{Why surplus structure is not superfluous\\
\small Forthcoming in \textit{British Journal for the Philosophy of science}}
\date{\today}
\maketitle

\abstract{The idea that gauge theory has `surplus' structure poses a puzzle: in one much discussed sense, this structure is redundant; but on the other hand, it is also widely held to play an essential role in the theory. In this paper, we employ category-theoretic tools to illuminate an aspect of this puzzle. We precisify what is meant by `surplus' structure by means of functorial comparisons with equivalence classes of gauge fields, and then show that such structure is essential for any theory that represents a rich collection of physically relevant fields which are `local' in nature.}

\tableofcontents

\centering
\section{Introduction}
\justify
In recent years, there has been a movement to view scientific theories as categories, and to analyse them using the tools of category theory.\footnote{See e.g. \citep{tt2016, halvorson2012scientific, halvorson2015categories, W1, W2} and references therein.}  
One set of motivations for this approach stems from considerations about how best to represent scientific theories from a philosophical perspective.\footnote{Note that we do not claim any originality for this approach on the part of the philosophical literature; the general idea of `representing theories as categories' has been common within theoretical physics for a long time, e.g. in the applications of noncommutative geometry \citep{Connes1994} to physics.} 
For instance, if scientific theories are thought of as---in some sense---collections of models, then we might have reason to consider inter-model relationships to be an important part of the `structural' content of a theory.
As such, advocates of the categorical approach recommend it because it not only allows us to clearly and precisely formulate the `structure' of a theory (i.e. of its collection of models), but also operates with a notion of structure that is flexible enough to incorporate inter-model relationships. 
Furthermore, once such structure has been represented, the framework naturally takes it into account when formalising inter-theoretic comparisons such as `theoretical equivalence' and `one theory having more structure than another'.

An important class of theories for testing the fruitfulness of categorical methods is `gauge theories', by which we mean theories of `Yang-Mills type', such as electromagnetism. This is because such theories provide a paradigmatic scenario in which comparisons of structure between two theories (or two formulations of a theory) naturally arise.\footnote{We note that the term `gauge' is used in many different ways within the philosophical literature; for instance, some authors use `gauge' to describe any theory with representationally redundant structure. Our stricter notion of `gauge' is standard within the high-energy theory literature, and is appropriate given that `Yang-Mills type' theories are our concern in the present article.}
More specifically, it is typically held that gauge theories admit of redundancies in the following sense: locally, a theory whose models use gauge fields is `empirically equivalent' to a corresponding theory whose models use only gauge equivalence classes of such fields. 
Thus, there is an intuitive sense in which the former theory possesses `surplus' (i.e. structure over and above the gauge equivalence classes) compared to the latter theory.
Illuminating and precisifying this sense is a task for which the categorical approach seems ideally suited. 
 
The above notion of `surplus' structure has received much attention within the philosophical literature.\footnote{See for instance \citep{Earman2004, Healey2007, Redhead2003} and the discussion in \citep{PittGauge}.}
An equally intriguing aspect of such structure, albeit one that has received less attention, is the idea that despite its redundancy in the above sense, it also plays an indispensable role in the theory's formulation.
For instance, it is commonly believed that gauge fields are essential for formulating classical gauge theory as a `local' field theory (in a sense that we will make precise in Section \ref{gaugefunc}).\footnote{This view is widely held, but for an explicit and mathematically sophisticated presentation, see \citep{Schreiber:2016pxa}. For a less formal point of view, see also the discussion of `stacks' in \citep{Witten2010}.}
This leads to an apparent puzzle: How can a `redundancy' be an essential feature of a theory?

Although it is certainly not obvious that the categorical approach has anything to say about this puzzle, we will argue in this paper that it does.
In particular, we will show that using categorical tools to precisify `surplus' gives us the resources to understand why such `surplus' plays a representationally significant role in the theory.  
From a broader perspective, we take our work to be a contribution to both the philosophy of scientific theories literature and the philosophy of gauge theory literature.
It contributes to the former by showing that for a large class of scientific theories, namely field theories, it is important to extend what we call the `theories as categories' paradigm to the `theories as functors' paradigm in order to fully accommodate inter-model relationships (in this case: relationships specifying how the local systems represented by the theory can determine global systems) which are relevant for gauge theory. (We say more about why this is important in Sections 4.1.1 and 4.1.2.)
It contributes to the latter by showing that our intuitive understanding of `surplus' is ambiguous and its disambiguation rests on the precise standard of sameness that one chooses for a scientific theory; furthermore, only one of these choices will yield a satisfying resolution of the puzzle. 

The structure of the paper is as follows. 
In Section \ref{theoriescats}, we discuss the general idea of representing a scientific theory as a category. 
In Section \ref{gaugecat}, we then apply the `theories as categories' framework to an idealised version of $U(1)$ gauge theory, where we temporarily (i)  restrict our focus to contractible manifolds, and (ii) ignore the local nature of fields.
Within this framework, we offer an analysis of `surplus' that captures standard physical intuitions about what constitutes the (locally) redundant structure of a gauge theory.
For completeness, we then briefly contrast our analysis with that of \citet{W1,W2}, which investigates a different concept of `surplus' and thus does not capture the intuitions that we are after. 

In Section \ref{gaugefunc}, we remove assumptions (i) and (ii) by representing theories as functors instead of as categories. 
We then resolve our earlier puzzle by showing that from the perspective of this more realistic notion of a theory, `surplus' (in the sense picked out in Section \ref{gaugecat}) is necessary in order to accommodate a rich set of physically relevant models while simultaneously representing the `locality' of fields.
Finally, Section \ref{conc} provides concluding morals for the project of representing scientific theories categorically and points out an intriguing connection between our argument and a highly influential approach to mathematical classification that has its roots in the groundbreaking work of \citet{Grothendieck1959}. 

\centering
\section{Theories as Categories}
\label{theoriescats}
\justify
We now describe the general framework for representing a theory as a category.\footnote{In keeping with much of the literature on this subject, we assume that--in this basic case--the relevant categories are not functor categories, which would require us to antecedently define domain and co-domain categories of models. The case of functor categories will arise in what we call the `theories as functors' paradigm of Section 4.}
Section \ref{relmod} discusses an inter-model relationship that will be of interest to us: the `representational equivalence' of two models, meaning that they (accurately) represent the same possible states of affairs.
We also discuss several different kinds of categories that one might use to encode representational equivalence.
Section \ref{relthry} then reviews how functors can be used to formulate equivalence and inequivalence relations between theories.  

\centering
\subsection{Relations between models}
\label{relmod}
\justify 

Recall that a model of a theory is a structure that represents a system that is possible according to the theory.\footnote{How these structures represent physical systems is a vexing question which we will not address here. See \citep{FriggNguyen16} for an overview of options currently available in the literature.}
For instance, if we are concerned with a (classical) field theory, then a system will be encoded by some particular field configuration that satisfies the theory's equations of motion.
In the present categorical formalisation of a scientific theory $T$, such models will be the \textit{objects} of the category $\C_T$ that one is using to represent $T$.\footnote{For convenience, we use 1-categories to represent scientific theories throughout this paper, but our points are easily extended to theories that require an $n$-categorical formulation.}  
In order to fully specify $\C_T$, one also needs to define the \textit{morphisms} (or arrows) of this category, which can be taken to represent various kinds of inter-model relationships. 
In this section, we will focus on the inter-model relationships represented by \textit{isomorphisms} (i.e. invertible morphisms): Models of $\C_T$ that are related by an isomorphism are taken to represent the same possible systems, despite the fact that the models could be (strictly speaking) distinct objects.\footnote{This is motivated by the fact that isomorphic models in a category $\C$ have the `same structure' (from the perspective of $\C$), and once this structure has been chosen to capture what, and how, the model represents, then isomorphic models can (albeit not \textit{must}) be taken to be representationally equivalent.} 
Furthermore, each distinct isomorphism specifies a distinct way of transforming one model into another in a manner that preserves its representational content; such information (about the different ways relating representationally equivalent models) will play a vital role in the analysis of Section \ref{gaugefunc}. As we discuss there and in the conclusion, the isomorphisms of a theory---thought of as a category---can also contribute to the representational capacities of a theory, namely by representing the different ways in which the systems represented by the objects (models) are related to one another (e.g. as local subsystems combining into global systems).

An elementary type of category of models, then, is one in which the only inter-model relationship encoded is representational equivalence, i.e. one in which the only morphisms are isomorphisms. Such categories are called \textit{groupoids}. 
In what follows, we will have reason to consider three kinds of groupoids. First, a groupoid is a \textit{categorical set} $\S$ if and only if its only morphisms are identity morphisms.\footnote{The reason for this nomenclature is that each categorical set can be defined from a set and vice versa: if a theory has a set of models $M_T$, then we can redescribe this data as the categorical set $\S_T$ whose objects correspond to elements of $M_T$ and whose morphisms are all and only identities (and conversely, if we have a categorical set we can `forget' the identity morphisms, leaving us with a bare set of objects). In the literature, this sort of groupoid is also known as a `discrete category'.} Each distinct model in the category is thereby taken to represent a different possible system.  
Second, a groupoid is a \textit{categorical equivalence relation} $\mathcal{E}$ if and only if between any two objects $A$, $B$ in $\mathcal{E}$ there is at most one morphism in the morphism set hom$(A, B)$.\footnote{Thus, every categorical set is a categorical equivalence relation, but not vice versa.} For categorical equivalence relations, clusters of objects connected by morphisms fall into the same representational `equivalence class', and thereby represent the same system. 
Finally, we will also consider groupoids that are neither categorical sets nor categorical equivalence relations. Like the latter, such groupoids contain disconnected clusters of representationally equivalent objects, but in this most general case, there may be multiple morphisms between (not necessarily distinct) objects within each cluster---these morphisms represent the different ways in which representationally equivalent models can be transformed into one another.

\centering
\subsection{Relations between theories}
\label{relthry}
\justify

Having represented a scientific theory as a category, one can then treat a theory as being itself an object of a category: the category of theories. A simple category that lends itself to this purpose is the $2$-category \textbf{Cat} whose objects are $1$-categories, whose $1$-morphisms are functors between categories, and whose $2$-morphisms (morphisms between $1$-morphisms) are natural transformations between functors. 
We now explain how to use the morphisms of this category to describe equivalence and inequivalence relations between theories.

\centering
\subsubsection{Equivalence}
\label{equiv}
\justify
Just as we previously used morphisms within a 1-category to describe the standard of sameness for models within a theory, we can use the morphisms of \textbf{Cat}, namely functors, to describe a standard of sameness for \textit{theories}. Two options emerge for such a standard: (a) \textit{categorical isomorphism}; or (b) \textit{categorical equivalence}, which includes isomorphism as a special case. Two (categorically represented) scientific theories $\C_{T_1}$ and $\C_{T_2}$ are \textit{categorically isomorphic} if and only if there exist functors $G: \C_{T_1} \to \C_{T_2}$ and $H: \C_{T_2} \to \C_{T_1}$ such that $GH = 1_{\C_{T_2}}$ and $HG = 1_{\C_{T_1}}$.
And they are \textit{categorically equivalent} if and only if $GH \cong 1_{\C_{T_2}}$ and $HG \cong 1_{\C_{T_1}}$; notice that these `quasi-inverse' functors weaken the equalities in the definition of (a) to natural isomorphisms.\footnote{Where needed we use `$\cong$' to denote isomorphism (both `natural isomorphism' as in the above relations between functors, as well as categorical isomorphism---the meaning will be clear from the context) and `$\simeq$' to denote categorical equivalence.} 

We will eventually take up the question of which of these standards best captures equivalence and inequivalence relations between (categorically represented) scientific theories such as gauge theory. 
For the moment, let us note that mathematical practice recommends (b) for methodological reasons: mathematicians generally take it that we should not limit the expressive resources of our standard of sameness by insisting on categorical isomorphism (thereby assuming that all relevant $2$-morphisms are trivial).
This is motivated by the fact that there are cases of categorical equivalence which are not categorical isomorphism.
For example, in the case of `Gelfand duality' one has a category $\C_G$ of geometric models and a non-isomorphic category $\C_A$ of algebraic models: although the objects of these respective categories look very different, they are mutually interdefinable (albeit only up to isomorphism), and so we would intuitively still want to say that $\C_G$ and $\C_A$ are `structurally the same'; the quasi-inverse functors of categorical equivalence (but not the inverse functors of categorical isomorphism) have precisely the weakness that allows us to say this.

In the rest of this paper, we will be concerned with a rather more trivial case of a categorical equivalence that is not an isomorphism. 
The general form of the case is as follows: (Skel): Given a groupoid $\C$ (and assuming the axiom of choice), one can construct a \textit{skeleton} category, $\text{sk} \C$, by picking an object from each isomorphism class of $\C$ and defining $\text{sk} \C$ to be the full subcategory of $\C$ containing all and only these objects. 
This skeleton is unique up to categorical isomorphism \citep[p. 43]{riehl16}.
It is easy to see that $\C$ is categorically equivalent to $\text{sk} \C$, because the relevant interdefinability relations are almost trivial.\footnote{Note that while the algebraic data of a category and its skeleton are equivalent, geometric properties (such as smoothness) are not in general preserved when one passes to the skeleton.} 

Armed with these conceptual tools, we can now pause to reflect on the question of what is gained by representing theories as groupoids, rather than as sets, of models.
Suppose that we represent our theory as a categorical set of models, and that we choose (a)---i.e. categorical isomorphism---as our standard of sameness. Then, since the property of being a categorical set is preserved by (a), and categorical sets and sets are interdefinable, nothing has been gained by passing from sets to categorical sets. 
On the other hand, choosing (b)---i.e. categorical equivalence---as our standard of sameness reveals a novel feature of the categorical approach, as demonstrated by the following special case of (Skel): Every categorical equivalence relation is equivalent to its skeleton, which is a categorical set \citep[p. 15]{bl05}.  
This fact shows us that the property of `being a categorical set' is not invariant under (b); thus, passing from sets to categorical sets allows us to take advantage of the representational flexibility afforded by (b).\footnote{Turning this point on its head, one might also use it to justify categorical equivalence as a standard of sameness: if one sticks to categorical isomorphism---which does preserve the property of being a categorical set---then one loses the distinctive flavor of a categorical formulation.}
For instance, in Section 3, we will consider a gauge theory example in which formulating the models as a categorical set allows us to pass to a categorically equivalent groupoid (a categorical equivalence relation) of models which is not a categorical set.  

The distinction between the standards of sameness (a) and (b) will be crucial for the work that we wish to do in this paper.
In Section 3, we will deploy this distinction to demonstrate that the intuitive notion of `surplus' is in fact ambiguous, and the resolution of this ambiguity turns on whether one adopts (a) or (b). 
Furthermore, Section 4 argues that we have more than merely methodological reasons to adopt (b): Within the context of gauge theory, (b) is part of an explanation of why `surplus' can play the representational role that it does.

Let us stress that all that we have said so far concerns when two (categorically represented) theories are \textit{formally} equivalent to one another. 
The further question of when these theories are equivalent \textit{tout court} (or `theoretically equivalent') clearly turns on how these formal representations should be \textit{interpreted}.
We do not wish to take a stand on `how to interpret physical theories' or on sufficient conditions for theoretical equivalence here.
For our purposes it suffices to note that, in addition to being related by a formal standard of equivalence (e.g. (a) or (b)), two equivalent theories would at least need to be `empirically equivalent', which means that the objects (models) of the formally equivalent categories should represent the same empirical data (i.e. the empirical data that would be accessible if the physical state of affairs represented by the models were actual). 
In the context of the restricted $U(1)$ gauge theory that we discuss in Section \ref{gaugecat} of this paper, we follow the physicists' standard practice of identifying the empirical content of a field model with its force field $F$.
As we will describe below, the assumption of empirical equivalence places an additional constraint on the functors that relate two categorical representations of $U(1)$ gauge theory. 

\centering
\subsubsection{Inequivalence}
\label{inequiv}
\justify

Finally, there are cases in which it is fruitful to consider relations between theories that have different amounts of structure; in the context of this paper, such a scenario is provided by our analysis of `surplus structure'. By hypothesis, there will be no categorical equivalence between such theories, but one might still try to quantify such differences by means of functorial relationships between theories.

Within the mathematical literature, there is a long tradition of describing such relationships by means of `forgetful functors' between categories. 
More recently, \citet{baez04} have offered a more refined taxonomy of `forgetfulness' in order to facilitate applications to categorical algebraic topology; we will find it convenient to apply this taxonomy to the analysis of `surplus' within our less sophisticated setting.\footnote{One reason for this convenience is that it will allow us to compare our approach to \citep[pp. 1042-3]{W2}, which also borrows Baez et al.'s taxonomy in order to discuss `gauge'.}
In order to understand Baez et al.'s taxonomy, first recall some textbook terminology concerning a functor $G: \C\to\D$: 
\begin{itemize}
    \item $G$ is \textit{full} if and only if for all objects $A, B$ in $\C$, the map (induced by $G$) from hom$(A,B)$ to \sloppy hom$(G(A), G(B))$---i.e. $G_{A \rightarrow B}: \text{hom}(A,B) \to \text{hom}(G(A),G(B))$---is surjective. 
    \item $G$ is \textit{faithful} if and only if for all $A, B \in \C$, $G_{A \rightarrow B}$ is injective.
    \item $G$ is \textit{essentially surjective} if and only if for every object $X$ in $\D$, there is an object $A$ in $C$ such that $G(A)$ is isomorphic to $X$. \fussy
\end{itemize}
There exists a $G: \C \to \D$ which is full, faithful, and essentially surjective if and only if $\C$ and $\D$ are categorically equivalent.\footnote{Notice that the left-hand-side of this biconditional is an existential claim. That there exists such a functor does not entail that every functor between the categories is full, faithful, and essentially surjective. Equivalently, that there exists a forgetful functor between the categories does not entail that they are not categorically equivalent.} 

Now, {\'a} la Baez et al., if $G$ fails to have at least one of these properties then $G$ is a \textit{forgetful} functor. If $G$ fails to be full then it forgets \textit{structure*}, if it fails to be faithful then it forgets \textit{stuff}, and if it fails to be essentially surjective then it forgets \textit{properties}.\footnote{Note that we wish to reserve the term `structure' for the intuitive or pre-theoretical notion of structure, and so we introduce `structure*' as a term of art that replaces Baez et al's use of `structure' to refer to the functorial notion. Thus, when we speak of a functor $G: \C \to \D$ forgetting \textit{structure*}, we are not claiming that this captures the intuitive idea that $\C$, or the scientific theory represented by $\C$, has `more structure' than the theory represented by $\D$. We leave open the possibility that a functor that forgets \textit{stuff} or \textit{properties} may also demonstrate that the domain category has more `structure' in some intuitive sense than the codomain category.} 
Thus, relative to a choice of functor, when comparing two categories $\C$ and $\D$ we can say that if $G$ fails to be full then $\C$ has more \textit{structure*} than $\D$; if $G$ fails to be faithful then $\C$ has more \textit{properties} than $\D$; and if $G$ fails to be essentially surjective, then $\C$ has more \textit{stuff} than $\D$.

\centering
\section{Gauge Theory as a Category}
\label{gaugecat}
\subsection{Gauge theory on contractible manifolds}
\justify

When physicists speak of `gauge theory', they typically have in mind Yang-Mills type gauge theory, of which the prototype is $U(1)$ gauge theory (also known as `electromagnetism'). We have adopted this nomenclature in our paper, and we shall only address the $U(1)$ case for simplicity. Furthermore, we will treat spacetime as a smooth manifold throughout this paper, since none of our points turn on the presence of spacetime metric structure.

As we mentioned in Section 1, a gauge theory is a field theory, and fields are typically taken to possess `locality', in the sense that field configurations on a manifold will be determined by specifying configurations on a collection of local subregions that `cover' that manifold. 
However, in this section, we will only consider a mock-up of field theory that makes two simplifying assumptions. 
First, the theory does not represent any facts about locality, and so a model will simply be a field configuration on the whole of a spacetime. 
Second, we will only consider models that are set on contractible manifolds.
Both of these assumptions will be removed in Section \ref{gaugefunc}. 

What are the fields of a $U(1)$ gauge theory?
Physicists typically formulate the theory in terms of gauge fields, i.e. 1-forms $A\in\Omega^1 (M, \u(1) )$.\footnote{For an introduction to gauge fields, see \citep[Chapter 1]{pokorski20}.}
From a gauge field $A$, one can define the force field $F:=dA$, which is invariant under the gauge transformations:

\beq\label{u1trans}
A \mapsto A + g^{-1} dg
\eeq

\noindent where $g: M \rightarrow U(1)$.\footnote{Throughout this paper we use the mathematicians' convention; the typical physicists' convention would denote the gauge transformation as $A\mapsto A -ig^{-1}dg$.}

Note that by definition, the force field $F$ satisfies the source-free equation of motion $dF = 0$.\footnote{There is also a dual source-free equation of motion $d *F = 0$, but we will not use it in this paper.}
On a contractible manifold $M$ (and assuming that we have no interest in subsystems on non-contractible subregions of $M$) the empirical content of the theory can be expressed solely in terms of the gauge-invariant fields $F$; whence familiar worries about `surplus' when the theory is formulated in terms of the gauge fields $A$.
 
We now turn to the articulation of a $U(1)$ gauge theory on $M$ as a category.
Since such a theory represents states of affairs on a particular spacetime $M$, we will adopt the standard practice of fixing $M$, and taking a model to be a field configuration on $M$.\footnote{For more details of the standard modelling practice, see \citep[Section 2]{BSS}. In Section 4 we will discuss how to assign field configurations to varying manifolds.}
Thus, a category of models will be a category whose objects are such gauge fields, and whose morphisms (if we should choose to include them) are symmetries of the relevant fields.

A specific category of this kind that has been widely used in the mathematical physics literature is the groupoid of gauge fields $\C_{A}$.
$\C_{A}$ is a groupoid---that is not a categorical set or equivalence relation---whose objects are the gauge fields $A\in\Omega^1(M,\u(1))$, and whose morphisms are the gauge transformations given by (\ref{u1trans}).

\centering
\subsection{Other candidates for representing $U(1)$ gauge theory}
\justify

Although $\C_{A}$ is one of the standard categorical representations of gauge theory within physics, it has scarcely been discussed within the philosophical literature.\footnote{For some exceptions, see (\citeauthor{dougherty16}, forthcoming) and \citeauthor{Teh2017}, forthcoming)}
On the other hand, several alternative candidate categories for representing gauge theory have been discussed by \citet{W1, W2}, in conjunction with a particular proposal for understanding `surplus'.
We now consider (simplified versions of) these alternative categories and proceed to develop an analysis of `surplus' which identifies $\C_A$ as possessing the relevant `surplus' structure; we then briefly contrast our proposal with that of \citet{W2}.\footnote{This simplification derives from omitting spacetime isomorphisms, since for our purposes we focus exclusively on the field structure of $U(1)$ gauge theory, which, in this section, is distinguished from its metric structure by keeping the background manifold and metric fixed and identifying models with \textit{n}-forms (or constructions thereof) and the morphisms with the maps between them.}

Alternative representations of gauge theory are given by the following groupoids:
\begin{itemize}    
\item $\mathcal{S}_F$ is the categorical set whose objects are force fields $F\in\Omega^2(M,\u(1))$ and all of whose morphisms are identities.\footnote{Recall two points: First, since $F$ is a force field, it is closed, i.e. $dF = 0$. Second, since we are assuming that $M$ is contractible, Poincare's Lemma implies that for every $F$ there is a non-unique $A\in\Omega^1 (M, \u(1) )$ such that $dA=F$.}

\item $\S_{A}$ is the categorical set whose objects are gauge fields $A\in \Omega^1(M,\u(1))$, and all of whose morphisms are identities.

\item $\mathcal{S}_{[A]}$ is the categorical set whose objects are (gauge) equivalence classes $[A]$ and all of whose morphisms are identities.

\item $\mathcal{E}_{A}$ is the categorical equivalence relation whose objects are gauge fields $A\in \Omega^1(M,\u(1))$ and whose morphisms are given by the gauge transformations $A \mapsto A + \lambda$, where $\lambda$ is a closed $1$-form, i.e. $d\lambda = 0$ (thus, we know from (\ref{u1trans}) that $\lambda = g^{-1} dg$ for some $g$).\footnote{The definition of $\E_{A}$ comes from the category $\overline{\mathbf{EM}}_2$ in \citep[pp.1082-3]{W1}, but where we have omitted spacetime isometries since these will not be relevant for our purposes.} Note that it follows from this definition that the only automorphism of $A$ is the trivial one, i.e. $\lambda=0$. 
\end{itemize}
\noindent We note that while the physically-minded reader may rightly chafe at the definition of $\E_A$ (why arbitrarily omit from (\ref{u1trans}) all automorphisms but the trivial one?), we will provide a mathematical explanation for this choice in Section 5, albeit one that will be at odds with our (physical) motivations.  

In order to give the reader some visual intuition for these definitions, Figure \ref{fig:cats} provides an illustration of these groupoids and $\C_A$. Notice that boxes (a)-(c) and (e) only contain identity automorphisms, whereas box (d) also contains non-trivial automorphisms.\footnote{Explicitly, the non-trivial automorphisms contained in $\C_A$ correspond to shifts by a (non-zero) constant function, which in turn give rise to the `rigid' or `global' $U(1)$ symmetries of electromagnetism, i.e. the (non-trivial) constant maps $g: M\to U(1)$. It is now clear why physicists typically take $\C_A$ to be the `physical category': when one couples $U(1)$ gauge theory to the matter sector (e.g. scalar or fermion fields), these rigid transformations are used to construct non-trivial conserved charges by means of Noether's (first) theorem. By contrast (and putting aside boundary terms), only trivial charges can be associated with the non-constant maps.} 
This last feature explains why box (c) contains exactly two arrows between distinct objects in the same cluster, whereas box (d) contains many such arrows (though not explicitly depicted, this is implied by concatenating non-trivial automorphisms with the depicted arrows).

\vspace{1mm}
[Insert Figure 1 about here]
\vspace{1mm}

We are now in a position to consider the functorial relationships between these different theories. 
In light of our discussion of empirical equivalence at the end of Section \ref{equiv}, we will only consider functors that satisfy the following criterion:

\begin{quote}
Empirical Equivalence of $U(1)$ Theories: Fix a contractible manifold $M$. We will say that two categorical representations $\C$ and $\D$ of $U(1)$ gauge theory on $M$ are \textit{empirically equivalent} if and only if for each model $X \in \C$ (which can be a gauge field, equivalence class of gauge fields, or force field depending on the category in question) with a force field $F$ (either induced by $F:=dA$, or explicitly defined in the model) there is a model $Y \in \D$ with the same $F$, and vice versa for each $Y \in \D$. 
We thus say that a pair of functors $G:\C \to \D$ and $H:\D \to \C$ \textit{realises an empirical equivalence} if and only if for every model (object) $X$ in $\C$ there is a model $Y$ in the image of $G$ with the same force field $F$, and vice versa for every model $Y \in \D$ and the image of $H$.
\end{quote}

\noindent 
In other words, such functors must act in one of the following ways on the categories in Figure 1: they take a gauge field $A$ to a force field $F$ such that $F=dA$ (and vice versa); they take an equivalence class of gauge fields $[A]$ to a gauge field $A$ such that $A \in [A]$ (and vice versa); and finally, they take an equivalence class of gauge fields $[A]$ to a force field $F$ such that $F=dA$ for all $A\in [A]$ (and vice versa).

For our purposes, it will suffice to consider the following empirical equivalence-realising functors:\footnote{Note that the actions on the morphisms by $G, H, I, K$ and $M$ are determined by the definition of a functor since the only morphisms in the domain categories are identities.}

\begin{quote}
\begin{description}

\item[$G$:] $\S_{A} \to \mathcal{S}_{[A]}$ such that for all $A$ in $\S_{A}$: $G(A)=[A]$ where $A\in [A]$; identities are sent to identities.

\item[$H$:] $\mathcal{S}_F \to \mathcal{S}_{[A]}$ such that for all $F$ in $\mathcal{S}_F$: $H(F)=[A]$ where $F=dA$ for all $A \in [A]$; identities are sent to identities.

\item[$I$:] $\mathcal{S}_{[A]} \to \mathcal{E}_{A}$ such that for all $[A]$ in $\mathcal{S}_{[A]}$: $I([A])$ = a representative $A$ of $[A]$; identities are sent to identities.\footnote{Note that this functor lacks naturality (in the choice of the representative of the equivalence class). The same applies to $K$ and $M$.}

\item[$J$:] $\mathcal{E}_{A} \to \mathcal{S}_{[A]}$ such that for all $A$ in $\mathcal{E}_{A}$: $J(A) = [A]$ where $A\in[A]$; and for all $A, B$ in $\mathcal{E}_{A}$ such that $J(A)=J(B)$: $\text{hom}(J(A),J(B)) = \lbrace 1_{J(A)}\rbrace$.

\item[$K$:] $\mathcal{S}_{[A]} \to \C_{A}$ such that for all $[A]$ in $\mathcal{S}_{[A]}$: $K([A]) = $ some $A$ in $\C_{A}$ such that $A \in [A]$; identities are sent to identities.

\item[$\tau$:] $\C_{A} \to \mathcal{S}_{[A]}$ such that for all $A$ in $\C_{A}: \tau(A) = [A]$ where $A \in [A]$; and for all $A, B$ in $\C_{A}$ such that $\tau(A)=\tau(B), \text{hom}(\tau(A), \tau(B)) =\lbrace 1_{\tau(A)}\rbrace$.

 \item[$M$:] $\mathcal{S}_{[A]} \to \S_{A}$ such that for all $[A]$ in $\mathcal{S}_{[A]}$: $M([A]) = $ some $A$ in $\S_{A}$ such that $A \in [A]$; identities are sent to identities.

\end{description}

\end{quote}

Since these functors realise empirical equivalences, the categories in Figure 1 are all empirically equivalent to each other. We now turn to the further inter-theoretic relationships encoded by these functors, as depicted in Figure 2.

\vspace{1mm}
[Insert Figure 2 about here]
\vspace{1mm}

First, there are (formal) equivalences, as represented in the vertical axis of Figure 2, where $\cong$ denotes categorical isomorphism and $\simeq$ denotes categorical equivalence.
It is easy to see that $H$ and its inverse demonstrate that $\mathcal{S}_F$ and $\mathcal{S}_{[A]}$ are isomorphic (cf. \citep[p. 1082]{W1}).
Similarly, $I$ and $J$ are quasi-inverse functors which demonstrate that $\mathcal{S}_{[A]}$ and $\mathcal{E}_{A}$ are categorically equivalent (but non-isomorphic). 

Notice that this equivalence is simply a special case of what we called (Skel) in Section \ref{equiv}, and in particular, of the fact that a categorical equivalence relation ($\mathcal{E}_{A}$) is equivalent to a categorical set ($\mathcal{S}_{[A]}$ and $\mathcal{S}_F$). 

Next, there are inequivalences, as represented in the horizontal axis of Figure 2.
The following propositions show that $G, K, \tau$ and $M$ are all forgetful functors which encode differences in the structure of theories $\C_A$, $\S_{[A]} (\cong \S_F \simeq \E_A)$, and $\S_A$ (see the Appendix for the proofs):

\begin{restatable}{proposition}{G}
$G$ forgets (only) \textit{structure*}. So, with respect to $G$, $\S_{A}$ has more \textit{structure*} than $\mathcal{S}_{[A]}$.

\end{restatable}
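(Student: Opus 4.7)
The plan is to verify directly the three conditions in Baez et al.'s taxonomy: that $G$ is faithful, essentially surjective, and \emph{not} full. Since both $\S_A$ and $\S_{[A]}$ are categorical sets, each hom-set is either a singleton (the identity) or empty, so the categorical bookkeeping will be almost immediate; the only substantive input is the elementary observation that distinct gauge fields can belong to the same gauge-equivalence class.

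First I would check faithfulness. Fix $A, B \in \S_A$. The set $\mathrm{hom}(A,B)$ is $\{1_A\}$ if $A=B$ and $\emptyset$ otherwise, so $G_{A\to B}$ is a map from a set of cardinality at most one, hence trivially injective. Next I would verify essential surjectivity. Given any object $[A]\in \S_{[A]}$, I pick a representative $A\in [A]$; then $G(A)=[A]$, and since $\S_{[A]}$ is a categorical set, equality here already yields isomorphism in the target.

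The key step is the failure of fullness. I would exhibit a pair $A, B\in \S_A$ with $A\neq B$ but $[A]=[B]$: for example, take $A=0$ and $B=g^{-1}dg$ for some non-constant smooth $g: M\to U(1)$ (such a $g$ exists even on a contractible $M$, e.g. $g=e^{if}$ for any non-constant $f$). Then $\mathrm{hom}(A,B)=\emptyset$ in $\S_A$, while $\mathrm{hom}(G(A),G(B)) = \mathrm{hom}([A],[A]) = \{1_{[A]}\}$. The induced map $G_{A\to B}: \emptyset \to \{1_{[A]}\}$ cannot be surjective, so $G$ is not full.

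I do not expect any real obstacles: the entire argument is bookkeeping once one notes that both source and target contain only identity morphisms, while the gauge orbits in $\S_A$ are generically nontrivial. The mildest subtlety is ensuring the existence of the distinct pair $A\neq B$ with $[A]=[B]$, which is guaranteed by the existence of nontrivial maps $g:M\to U(1)$ with $g^{-1}dg\neq 0$; this holds for any manifold $M$ of positive dimension and in particular for contractible $M$. Combining the three facts, $G$ fails fullness while preserving faithfulness and essential surjectivity, so it forgets \emph{structure*} and only \emph{structure*}, and therefore $\S_A$ has more \emph{structure*} than $\S_{[A]}$ relative to $G$.
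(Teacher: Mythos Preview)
Your proposal is correct and follows essentially the same route as the paper: verify faithfulness and essential surjectivity from the fact that both categories are categorical sets, then exhibit distinct $A\neq A'$ in the same gauge class to show $G_{A\to A'}:\emptyset\to\{1_{[A]}\}$ is not surjective. The only difference is that you supply an explicit witness ($A=0$, $B=g^{-1}dg$ with $g$ non-constant), whereas the paper simply asserts the existence of an equivalence class with more than one element; this is a harmless elaboration, not a different strategy.
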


\begin{restatable}{proposition}{K}
$K$ forgets (only) \textit{structure*}. So, with respect to $K$, $\mathcal{S}_{[A]}$ has more \textit{structure*} than $\C_{A}$.
\end{restatable}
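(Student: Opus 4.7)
The plan is to verify the three categorical properties that together show $K$ forgets only \textit{structure*}: (i) $K$ is essentially surjective, (ii) $K$ is faithful, and (iii) $K$ fails to be full. By the Baez et al. taxonomy recalled in Section \ref{inequiv}, this will establish the proposition. I would take these in order of increasing informativeness.

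First, essential surjectivity is almost immediate from the construction of $K$. Given any object $A\in\C_{A}$, let $[A]\in\mathcal{S}_{[A]}$ be its gauge equivalence class. By definition $K([A])$ is some representative $A'\in[A]$, so there exists $g:M\to U(1)$ with $A' = A + g^{-1}dg$. This datum supplies a morphism $A\to A'$ in $\C_{A}$, which is an isomorphism because $\C_{A}$ is a groupoid. Hence every object of $\C_{A}$ is isomorphic to one in the image of $K$.

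Next, faithfulness is a consequence of the categorical-set structure of the domain. For any $[A],[A']\in\mathcal{S}_{[A]}$, the hom-set $\mathrm{hom}([A],[A'])$ is either empty (when $[A]\ne[A']$) or the singleton $\{1_{[A]}\}$ (when $[A]=[A']$). Any function out of a set of cardinality at most one is injective, so $K_{[A]\to[A']}$ is trivially injective for every pair of objects.

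The substantive step is showing that $K$ fails to be full, which is where the surplus automorphisms of $\C_{A}$ enter. Fix an object $[A]\in\mathcal{S}_{[A]}$ and set $A:=K([A])$. An automorphism of $A$ in $\C_{A}$ is a gauge transformation fixing $A$, i.e.\ a map $g:M\to U(1)$ with $g^{-1}dg=0$, equivalently $dg=0$. As the paper's own footnote records, these are precisely the locally constant maps into $U(1)$, forming a nontrivial group (isomorphic to $U(1)^{\pi_0(M)}$ when $M$ is nonempty). However $\mathrm{hom}([A],[A])=\{1_{[A]}\}$, so $K_{[A]\to[A]}$ has image $\{1_{A}\}\subsetneq \mathrm{Aut}_{\C_{A}}(A)$, witnessing failure of fullness. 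Thus $K$ forgets \textit{structure*} and only \textit{structure*}, and the comparative claim about $\mathcal{S}_{[A]}$ having more \textit{structure*} than $\C_{A}$ (with respect to $K$) follows from the definitions in Section \ref{inequiv}.

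The main obstacle here is conceptual rather than technical: one must correctly identify the nontrivial automorphisms in $\C_{A}$ (the rigid $U(1)$ transformations) and recognise that these are exactly what $K$ cannot see, since $\mathcal{S}_{[A]}$ has collapsed every equivalence class to a point with only an identity arrow. Once this is in hand the three verifications are short.
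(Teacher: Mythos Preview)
Your proof is correct and follows essentially the same approach as the paper's: verifying faithfulness from the domain being a categorical set, essential surjectivity from every $A$ lying in some $[A]$ and being isomorphic to the chosen representative, and failure of fullness from the nontrivial automorphisms in $\C_A$. The only difference is that you spell out explicitly which maps $g$ give the nontrivial automorphisms (the constant maps into $U(1)$), whereas the paper simply appeals to definition (\ref{u1trans}) and leaves this identification to a footnote.
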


\begin{restatable}{proposition}{T}
\label{CMstuff}
$\tau$ forgets (only) \textit{stuff}. So, with respect to $\tau$, $\C_{A}$ has more \textit{stuff} than $\mathcal{S}_{[A]}$.
\end{restatable}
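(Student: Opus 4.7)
The plan is to verify that $\tau$ fails to be faithful but is both full and essentially surjective; by the taxonomy of Section~\ref{inequiv}, this amounts exactly to saying that $\tau$ forgets (only) \emph{stuff}. Essential surjectivity is immediate: every object $[A]$ of $\mathcal{S}_{[A]}$ lies literally in the image of $\tau$, since for any representative $A \in [A]$ we have $\tau(A) = [A]$, and so in particular $\tau(A) \cong [A]$.

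Next I would check fullness. Fix objects $A, B \in \C_A$. If $\tau(A) \neq \tau(B)$, that is, $[A] \neq [B]$, then $\hom(\tau(A),\tau(B)) = \emptyset$ in $\mathcal{S}_{[A]}$, so the map $\tau_{A \to B}$ is vacuously surjective. If $\tau(A) = \tau(B) = [A]$, then the target $\hom([A],[A]) = \{1_{[A]}\}$ is a singleton; it suffices to exhibit any morphism $A \to B$ in $\C_A$, since $\tau$ must send it to $1_{[A]}$. But $[A] = [B]$ means precisely that $B = A + g^{-1}dg$ for some $g: M \to U(1)$, and this $g$ defines the required gauge transformation $A \to B$ in $\C_A$. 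Hence $\tau$ is full.

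The core of the proposition is the failure of faithfulness, and this is where the definition of $\C_A$ (which tracks gauge transformations by the data of $g$, not merely by their action on $A$) does the work. Take any $A \in \C_A$ and consider $\hom(A,A)$: this contains every gauge transformation determined by a $g: M \to U(1)$ with $g^{-1}dg = 0$. On the contractible (hence connected) manifold $M$, this condition forces $g$ to be constant, so $\hom(A,A)$ contains a whole $U(1)$'s worth of distinct automorphisms (as emphasized in the earlier footnote on rigid $U(1)$ symmetries). All of these are sent by $\tau$ to the single morphism $1_{[A]} \in \hom([A],[A])$, so $\tau_{A \to A}$ fails to be injective. Therefore $\tau$ is not faithful, and combined with the previous two steps this gives the claim.

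The only real subtlety — and thus the step I would treat most carefully — is the last one: one must be explicit that morphisms in $\C_A$ are indexed by the gauge parameter $g$ itself, so that distinct constant $g \in U(1)$ genuinely yield distinct automorphisms of a given $A$, rather than being collapsed to the identity. Once this bookkeeping convention is made explicit (consistent with how $\C_A$ is used in the physics literature and as foreshadowed in the paper's discussion of rigid $U(1)$ symmetries and Noether charges), the rest of the argument is essentially bookkeeping.
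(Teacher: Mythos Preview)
Your proof is correct and follows the same three-step structure as the paper's own proof (essential surjectivity, fullness, failure of faithfulness), with the same case splits. You supply more explicit detail than the paper does---particularly in exhibiting the gauge transformation witnessing fullness and in identifying the non-trivial automorphisms as the constant maps $g: M \to U(1)$---but the underlying argument is identical.
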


\begin{restatable}{proposition}{M}
\label{M}
$M$ forgets (only) \textit{properties}. So, with respect to $M$, $\mathcal{S}_{[A]}$ has more \textit{properties} than $\S_{A}$.
\end{restatable}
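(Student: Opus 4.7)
The plan is to verify that $M$ is full and faithful but not essentially surjective, so by the Baez et al.\ taxonomy reviewed in Section \ref{inequiv} it forgets only \emph{properties}. The key structural observation is that both $\S_{[A]}$ and $\S_A$ are categorical sets: every hom-set is either empty or the singleton containing an identity, and isomorphism of objects coincides with equality. This reduces fullness and faithfulness to a routine case analysis and concentrates the real work in the failure of essential surjectivity.

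For fullness and faithfulness I would split on whether $[A] = [B]$ in $\S_{[A]}$. If they are equal, then by definition $M$ sends them to a common chosen representative, so the induced map on hom-sets is a bijection between singletons that carries the identity to the identity. If they are distinct, then $M([A])$ and $M([B])$ are distinct elements of $\S_A$ (they are representatives of different gauge equivalence classes), so both hom-sets are empty and the induced map is vacuously bijective. Hence the map on hom-sets is simultaneously surjective and injective, giving fullness and faithfulness.

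The substantive step is the failure of essential surjectivity. Since isomorphism in $\S_A$ reduces to equality, it suffices to exhibit an $A' \in \S_A$ which is not in the image of $M$. Given any class $[A]$, set $A_0 := M([A])$ and pick any non-constant smooth $g$ taking values in $U(1)$. Then $A' := A_0 + g^{-1}dg$ lies in $[A]$ by the definition of gauge equivalence in (\ref{u1trans}), yet $A' \neq A_0$. No equivalence class $[B]$ can satisfy $M([B]) = A'$: such an equality would force $A' \in [B]$ and hence $[B] = [A]$, contradicting $M([A]) = A_0 \neq A'$. The only conceptual obstacle is ensuring that each equivalence class contains more than one element, which is immediate from the existence of non-trivial gauge transformations on any spacetime of interest. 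With $A'$ in hand, no object of $\S_{[A]}$ maps to anything isomorphic to $A'$, so $M$ fails to be essentially surjective while retaining fullness and faithfulness, establishing that it forgets only properties.
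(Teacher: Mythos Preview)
Your proof is correct and follows essentially the same approach as the paper's: a case split on equality of classes to verify that the induced maps on hom-sets are bijections (both categories being categorical sets), followed by exhibiting a second representative $A'$ in some class that cannot lie in the image of $M$. The only cosmetic difference is that you construct $A'$ explicitly via a non-constant $g$, whereas the paper simply asserts the existence of such an $A'$ ``by construction''; your version is slightly more detailed but not substantively different.
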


In the next subsection we discuss how these forgetful relationships correspond to the notion of `surplus' structure.

\centering
\subsection{Surplus and inter-theoretical comparisons}
\justify

Within philosophy, the terms `surplus' structure or `descriptive fluff' have been used to refer to features of gauge theory that seem to play no representational role in theory (cf. \citep{Redhead2003, Earman2004}). And although physicists tend not to use either of these terms, it is extremely common for them to claim that `gauge symmetry is a \textit{redundancy}, because it relates gauge fields that represent the same physical state' and that it is the `gauge equivalence classes' that capture the non-redundant physical data (the references for this view are too many to list, but see especially \citep[Section 6.1.1]{tong06}, \citep[p.64]{AG2011}, and \citep[p.497]{sundermeyer14}). 
Thus, the notion of `surplus' structure or `redundancy' that we are after applies to theories in which (i) gauge fields corresponding to the same $F$ are represented as representationally equivalent, and (ii) the gauge symmetries are included as part of the structure that is `surplus'.
While in Section 4 we argue that this notion of `surplus' \textit{does} in fact play a representational role when one adopts a sufficiently rich conception of `gauge theory', for now we will only be concerned with the question of how precisely to understand `surplus' within a simplified setting.
As a rough first pass, we might say that a theory with `surplus' structure is one that includes the gauge symmetries---understood as isomorphisms---as part of the theory (thus satisfying (i)), and whose structure is in some still-to-be-precisified sense `surplus' with respect to a theory that only includes the gauge equivalence classes (thus satisfying (ii)).
In other words, such a theory satisfies
\begin{quote}
(Surplus): A (gauge) theory has `surplus structure' (which includes the gauge fields and gauge symmetries) over and above its gauge equivalence classes.\footnote{Note that this is essentially the formulation given in \citep{Redhead2003}.} 
\end{quote}

\noindent 

How can this intuitive notion of (Surplus) be precisified by means of categorical tools?
As we see it, the `theories as categories' paradigm allows us to provide a four-fold precisifiation of what it means for a theory to have `structure over and above its gauge equivalence classes'.

First, we use categories to represent the two theories that we are comparing: a theory whose field content involves only gauge equivalence classes, and a theory that is a candidate for having (Surplus) structure. We thus take $\S_{[A]}$ to represent the former, whereas the latter can be represented by either $\C_A$, $\E_A$, or $\S_A$. 

Second, our target concept of `surplus' applies to theories that use gauge symmetries to encode relations of representational equivalence between gauge fields. 
Since $\S_A$ does not include these symmetries, it is not the appropriate category for capturing the content of (Surplus). 
Consequently, our focus will be restricted to comparisons between $\S_{[A]}$ (the theory that does not satisfy (Surplus)) and either $\C_A$ or $\E_A$ (the candidate theories that satisfy (Surplus)).

Third, once we have represented theories as categories, we will need to settle on a standard of sameness for theories (thus disambiguating an aspect of what is meant by `over and above' in (Surplus)). 
If categorical isomorphism is our standard, then we obtain what one might think of as the `naive' reading of (Surplus), on which $\E_A$ counts as having `surplus' over and above the gauge equivalence classes of $\S_{[A]}$.
However, as we argued earlier, there are strong methodological reasons for picking categorical equivalence as our standard.
We will thus make this choice, and so $\mathcal{E}_{A}$ will count as `the same theory' as $\mathcal{S}_{[A]}$.
Consequently, the only candidate category which satisfies (Surplus) is $\C_A$.

Fourth, we perform a functorial comparison between $\C_A$ and $\S_{[A]}$ that will clarify the `surplus' structure that $\C_A$ possesses over and above $\S_{[A]}$. 
Such a comparison is the content of Prop. 3.2.3 (and the left arm of Figure 2), which tells us that the forgetful truncation functor $\tau: \C_{A} \rightarrow \mathcal{S}_{[A]}$ forgets \textit{stuff} in Baez et al.'s taxonomy of forgetfulness.\footnote{Here, one familiar interpretation of \textit{stuff} as `entities within the objects that can be equipped with structure' does not apply because we are not dealing with categories which are structured sets. However, the `\textit{structure/stuff/properties}' distinctions are relevant beyond the case of structured sets, and are standardly used in categorical algebraic topology, for which we refer the reader to \citep{baez05}. For a general account of truncation, see \citep{lurie}.}

This four-fold precisification of `surplus' structure allows us to define \textit{surplus* structure} as the \textit{stuff} that is forgotten by the functor $\tau$. This provides the following characterisation of what it means for a $U(1)$ gauge theory to have such structure:

\begin{quote}
(Surplus*): A (gauge) theory contains the \textit{stuff} that is forgotten by $\tau$ (namely the non-trivial automorphisms of the gauge fields and the result of concatenating them with the morphisms already included in $\E_A$).\footnote{To see clearly why it is the non-trivial automorphisms that are important, it is useful to compare the skeleton sk$\C_A$---which is categorically equivalent to $\C_A$---to $\S_{[A]}$; the key difference is precisely the presence of the non-trivial automorphisms.}
\end{quote}

It is worth pausing briefly to contrast our notion of `surplus' with the rather different concept that is being analysed in \citep{W2} (albeit also under the heading `surplus'), where we will omit spacetime isometries from the latter's categories since we are only concerned with gauge structure. 
First, note that both accounts hold that the theory without `surplus' (in each respective sense) is $\S_{[A]}$, and both take categorical equivalence to be the correct standard of sameness for theories. 
The key difference between these accounts lies in their respective target notions of `surplus'. On the one hand, we are interested in a notion of `surplus' that is possessed by theories which take gauge fields to be representationally equivalent (and which represent this by means of gauge transformations between gauge fields); thus, $\C_A$ is our candidate for such a theory and `surplus' is characterised by the \textit{stuff}-forgetting functor $\tau: \C_A \rightarrow \S_{[A]}$ (cf. the left arm of Fig. 2).
By contrast, Weatherall's notion of `surplus' applies to a theory that does not represent gauge fields as representationally equivalent, namely $S_A$ (which has no gauge transformations between objects and is essentially a set). Thus, his notion of `surplus' is characterised by the structure*-forgetting functor $G: \S_{A} \rightarrow \S_{[A]}$ (cf. the right arm of Fig. 2), which intuitively `forgets the lack of arrows between objects'.\footnote{
More generally, one way of motivating the focus on $\S_A$ as a candidate category for possessing `surplus' is to adopt a conception of structure that focuses on the structure of \textit{objects} of a category, and uses the morphisms to track how much structure these objects have.
An implicit maxim guiding such an approach is something like: the more isomorphisms in a category, the less (relevant) structure its objects have, because isomorphisms are structure-preserving maps; the more there are, the less structure there is to preserve.}

We take no issue with Weatherall's account as an attempt to compare the objects of $\S_A$ (which is essentially a set) with $\S_{[A]}$. 
However, from the point of view of physical practice, there are at least two good reasons to focus on $\C_A$ and its comparison with $\S_A$.
First, and as we mentioned earlier (cf. fn. 26), $\C_A$---but not $\S_{[A]}$ or $\S_A$---contains the global gauge transformations (constant maps $g: M \rightarrow U(1)$) which are responsible for (non-trivial) charge conservation when a $U(1)$ gauge theory is coupled to matter fields. 
Second, and as we will argue in the next section, the surplus* structure of $\C_A$ is required in order to explain the oft-cited conceptual connection between the `redundancies' of a gauge theory, on the one hand, and the `locality' of the theory, on the other. 

\centering
\section{Gauge Theory as a Functor}
\label{gaugefunc}
\subsection{(Richness) and (Locality)}
\justify

In this section, we return to the puzzle that we raised in the introduction: how can `surplus' structure be an essential feature of a theory?
Roughly speaking, our strategy for resolving this puzzle will be to argue that in order to represent `locality' (i.e. a property of field theory, as it is standardly conceived) the notion of theory in Section 3 needs to be extended to a more general context; from the perspective of this generalisation, surplus* structure is required in order to represent a rich set of physically relevant models that possess locality. 

More carefully now, recall that the formulation of a `theory as a category' in Section 3 rested on two simplifying assumptions: (i) we only considered fields set on contractible regions; and (ii) we ignored `locality' by taking such regions to be entire spacetimes. 
Recall too that our precisification of `surplus', i.e. surplus*, was predicated on this narrow notion of a `theory'.
In Sections 4.1.1 and 4.1.2, we will consider how removing (i) and (ii) suggests a move to a broader \textit{functorial} notion of `theory', which is not merely a collection of models, but rather a uniform assignment of collections of models to different spacetimes (including subregions of a spacetime).

The motivation for lifting (i) is:
\begin{quote}
(Richness): Gauge theories have a rich set of physically relevant `topological soliton' solutions, which are defined on non-contractible manifolds. 
\end{quote}

\noindent The use of the term `physically relevant' here covers the spectrum of empirical and theoretical roles that such solutions play in contemporary physics.\footnote{See \citep{Manton2004} for an overview of (classical) topological soliton solutions.} On the empirical side: many solitons (instantons in Yang-Mills theory; vortices in condensed matter physics) are used to make predictions about models of field theory.
And on the theoretical side, solitons are even more pervasive: for instance, many field-theoretic dualities make essential use of soliton solutions. 
Thus, any philosophically adequate analysis of gauge theories must take into account the richness of their solitonic models.
Since the `theories as categories' of Section 3 were only defined on contractible manifolds, they do not in themselves have the resources to capture (Richness); however, Section 4.1.1 will introduce the machinery of principal $G$-bundles that allows us to do so. Furthermore, Section 4.1.2 will introduce a notion of `gluing' that forges a link between the `theories as categories' of Section 3 and (Richness). 

The motivation for lifting (ii) is the fact that field theories are typically taken to possess:
\begin{quote}
(Locality): All physically relevant global models (i.e. field configurations on an arbitrary spacetime $M$) are determined by `compatibly gluing' a set of local models (i.e. field configurations on contractible subregions $U_i$ of $M$, where $\{U_i\}$ is a cover of $M$).
\end{quote}

\noindent
In other words, (Locality) is a constraint that relates the physically relevant global models on $M$ to a family of `theories as categories' (whose objects are local models), where the family is parameterised by contractible subregions of $M$. 

By itself, (Locality) does not tell us what the physically relevant global models are. However, (Richness) helps specify this class of models, and as argued below, when (Richness) and (Locality) are conjoined they imply (Surplus*), thus showing how `surplus' structure can be an essential feature of theories that are rich enough to combine three aspects: one can represent global models (i.e. topological solitons); one can represent local models (i.e. models of the sort considered in Section 3); and one can represent the relationship between local and global models (i.e. how the former can be combined to yield the latter). 

In order to mount this argument more precisely, we will now proceed to describe the categorical formulation of (Richness) and (Locality).
While this is a subject matter about which one can be exceedingly abstract, we will focus on a concrete and elementary example for $U(1)$ gauge theory: the Dirac monopole on the sphere $M=S^2$.

\centering
\subsubsection{Representing (Richness)}
\label{richness}
\justify

The global `topological soliton' models that exemplify (Richness) are standardly defined by means of the machinery of a principal $G$-bundle with connection.\footnote{A principal $G$-bundle is a fibre bundle $\pi: P\to M$ with a continuous right action by the group $G$ on the total space $P$ which preserves fibres.} 
For instance, in the our focal case of $U(1)$ gauge theory, (Richness) requires that we take into account the Dirac monopole model, which is a $U(1)$-bundle $P$ equipped with a connection $1$-form $\omega$ over the sphere $S^2$.
Such global models can be understood as combining two pieces of data: (i) a topological charge that corresponds to the isomorphism class of $P$; and (ii) a particular connection $\omega$ on $P$ (up to connection-preserving bundle isomorphism).
In the Dirac monopole case, it will be particularly useful for us to view the monopole's global force field $F \in \Omega^2 (S^2, \u(1))$ as a summary of (i) and (ii): $F$ is constructed from the bundle's curvature $d\omega$, and by a standard result from Chern-Weil theory, the cohomology class of $F$ corresponds to the topological charge of the monopole (and thus to the isomorphism class of $P$).\footnote{Here we note a technical subtlety: The first (integral) Chern class of a $U(1)$-bundle on $M$ is a complete invariant of the bundle's isomorphism class, but when we represent it by $F$, we are omitting possible torsion information in this class. In the case where the connection is flat, such torsion data corresponds precisely to the monodromies that account for the Aharanov-Bohm effect, cf. footnote \ref{footnoteAB}.}  

Due to a classical result by \citet{Steenrod}, we know that isomorphism classes of principal $G$-bundles over $M$ are in bijection with homotopy classes of maps from $M$ into the classifying space $BG$ of the group $G$.
Thus, in cases where the classifying space $BG$ is well-understood, homotopy classes of maps provide a useful classification of the possible topological charges of a soliton model.
Applying this result to this case of the Dirac monopole, we see that the principal $U(1)$-bundles $P$ over $S^2$ are in bijection with $[S^2, BU(1)=\mathbb{C}P^\infty]=\pi_2(\mathbb{C}P^\infty)\cong \mathbb{Z}$.
Thus, the topological charge of a Dirac monopole model is $n \in \mathbb{Z}$. The charge is also known as the `first Chern number' of the model since it can be computed as $n =\frac{1}{2\pi} \int_{S^2} F$, where $F$ represents the first Chern class of $P$. 

In order to speak of a collection of global models on $M$, it is standard practice to define the functor $Bun: \textbf{Man}\to \textbf{Grpd}$, which assigns to any $M$ (not necessarily contractible) the groupoid $Bun(M)$ whose objects are principal $U(1)$-bundles over $M$ with connection $(P,\omega)$, and whose morphisms are connection-preserving bundle isomorphisms over the identity on $M$.
In particular, $Bun(S^2)$ contains all the Dirac monopole models of $U(1)$ gauge theory.
We note that while one could say that we are still within the `theories as categories' paradigm since $Bun(M)$ is a category (for a fixed $M$), this observation misses the point of the functorial perspective on (Richness): our theory assigns global models to arbitrary $M$ (and regions of $M$) in a uniform manner.\footnote{The relevant sense of `uniformity' here includes the preservation of the pattern of morphisms between objects in the functor's domain. See also the discussion of `uniformity' and its relationship with `canonicity' in Section 3.4 of \citep{tt2016}.} 

What is the relationship between $Bun(M)$ (and thus the Dirac monopole models) and the previously mentioned groupoid of gauge fields $\C_A$ (on $M$)? 
Such a relationship is easy enough to specify in the case where $M$ is contractible: $Bun(M)$ and $\C_A$ are then equivalent as categories.\footnote{This equivalence works by mapping $A\in \Omega^1(M,\mathfrak{u}(1))$ to the trivial bundle $M\times U(1)$ with connection $\omega:={p_1}^*A+ {p_2}^*\mathfrak{m}$, where $\mathfrak{m}\in \Omega^1(U(1),\u(1))$ is the Maurer-Cartan form, and $p_1:M\times U(1)\to M, p_2:M\times U(1)\to U(1)$ are projection maps. (Explicitly, the Maurer-Cartan form is obtained by taking for any $g\in U(1), \mathfrak{m}_g(v):=(L_g)_*^{-1}(v)$, where $v\in T_g U(1)$, and $L_g:U(1)\to U(1)$ is left multiplication by $g$.)
Since every connection on $M\times U(1)$ is of this form, we have the desired bijection between $\Omega^1(M, \u(1))$ and connection forms on \textit{trivial} principal $U(1)$-bundles.
Furthermore, since \textit{all} principal bundles over contractible $M$ are trivial, we have the desired categorical equivalence between $Bun(M)$ and $\C_A$.}
However, for non-contractible $M$, the relationship between $Bun(M)$ and $\C_A$ becomes more complicated, because the groupoid of gauge fields $\C_A$ is only defined on contractible manifolds.\footnote{One could consider defining $\C_A(M)$ for non-contractible $M$, but this would depend on the cohomology of $M$ and in general could give only the trivial solution; in other words, in general there is no non-trivial global gauge field defined on non-contractible $M$.} 
And since the Dirac monopole model requires a non-contractible $M= S^2$, we will need to forge a more subtle relationship between $Bun(M)$ and $\C_A$ if we are to connect the `theories as categories' paradigm---and thus (Surplus*)---to the Dirac monopole models which exemplify (Richness).
Representing such a relationship is required by (Locality), which we now proceed to discuss.

\centering
\subsubsection{Representing (Locality)}
\label{loc}
\justify

Recall that (Locality) requires that every physically relevant global model on a (possibly non-contractible) manifold $M$ is determined by `compatibly gluing' some family of local models parametrised by the contractible cover $\{ U_i \}$ of $M$.
For $U(1)$ gauge theory, a local model on $U_i$ is contained in one of the `theories as categories' of Section 3.
Thus, from a categorical point of view, the most natural way to conceptualise this `family of categories' parametrised by $\{ U_i \}$ is by representing theories as \textit{functors} that assign the categories of Section \ref{gaugecat} to contractible subregions $U_i\subseteq M$.\footnote{Below we will only describe the action of such functors on objects; for the issue of how such functors are defined on morphisms, we refer the reader to \citep{BSS}.} 
Functoriality is important here because the `theory functors' capture the relationship between `relations between regions of spacetime' and `possible gluing relations between models on those regions'.\footnote{We note that one could again claim to be describing this data within the `theories as categories' framework by fixing spacetime regions and adding the relevant relationships by hand, but this would be nothing other than an implicit appeal to functoriality.}

We will define such functors by extending the `theory as a category' notation of the previous section in the following way. Suppose we previously used $\C_T$ to denote a groupoid of models for $T$ for some fixed contractible manifold $M$. 
Abusing notation, we now use $\C_T: \textbf{Man}^{\copyright}\to \textbf{Grpd}$ to denote a \textit{functor} from the category of contractible manifolds to the category of groupoids, which assigns that groupoid of models to $M$. 
Thus, in this new notation, $\S_F: \textbf{Man}^\copyright \to \textbf{Grpd}$ takes an arbitrary contractible manifold $M$ to the categorical set $\S_F(M)$; \textit{mutatis mutandis} for the functors $\E_A$ and $\C_A$.\footnote{Notice that this `theories as functors' paradigm includes the `theories as categories' paradigm as a special case: to recover the latter, we simply fix a contractible $M$ on which to evaluate the functors.
By highlighting the functorial aspect we want to emphasise the significance of being able to consider varying background spacetimes. It is this feature which is significant when dealing with the locality of a theory.} 
We can now give a preliminary formulation of (Locality) in these terms: it is the requirement that any physically relevant global model in $Bun (M)$ (where $M$ is possibly non-contractible) can be constructed by `compatibly gluing' some collection of local models $\{ A_i \in \C_T (U_i)  \}$, where $\{ U_i\}$ is a contractible open cover of $M$.

To go further, we will need to articulate the notion of `compatible gluing' that relates global models to some collection of local models. 
While there are various abstract and technical ways of precisifying this notion, it will suffice for our purposes to provide an elementary description of gluing for $U(1)$ gauge theory.\footnote{For an elegant formulation of gluing in terms of homotopy limits, see equation (3.4) of \citep{BSS}. This more sophisticated perspective further highlights the importance of the `theories as functors' paradigm for field theory: it allows us to describe a global theory functor as a (higher) limit of a local theory functor.}
Suppose that we have two contractible regions $U_1, U_2$ with non-trivial intersection $U_{12}:=U_1\cap U_2$, and let $A_1 \in\C_T(U_1)$ and $A_2\in \C_T(U_2)$ be a pair of local models over each region.\footnote{For simplicity, we will only deal with a pair of contractible subregions, but the extension to an arbitrary number of subregions will be straightforward. }
We then define the groupoid $\overline{\C_T}(U_{12})$ by taking its objects to be models in $\C_T(U_i)$ (where $i = 1,2$) restricted to $U_{12}$ and its morphisms to be the morphisms in $\C_T(U_i)$ which are similarly restricted.\footnote{Note that this is not the same thing as applying the functor $\C_T$ to $U_{ij}$; the latter will not in general make sense because $U_{ij}$ is not necessarily contractible.} 
We say that the local models $A_1$ and $A_2$ can be `compatibly glued' if and only if there is at least one morphism in $\overline{\C_T}(U_{12})$ between $A_1|_{U_{12}}$ and  $A_2|_{U_{12}}$.  
A choice of gluing is thus a choice of morphism that relates the local models on $U_{12}$; as we will soon see in the case of the Dirac monopole, we will be particularly interested in the homotopy class determined by such a choice.
 
We can now articulate the notion of $U(1)$ gauge theory for which it makes sense to say that `the theory satisfies (Locality)' with respect to the theories of Section 3.
Such an \textit{expanded theory} has three elements: (i) it represents global models by means of a `global theory' $Bun( ~\cdot~)$; (ii) it represents local models by means of a `local theory' $\C_T (~\cdot~)$; and (iii) it represents global-local relationships such as (Locality)---or lack thereof---by means of $\overline{\C_T}(U_{ij})$ for all non-trivial overlapping regions $U_{ij}$.



We now have a formulation of (Locality) that applies to a general $M$. 
In our subsequent discussion of the Dirac monopole, it will be advantageous to use a less general formulation that makes 
explicit reference to the monopole's global force field $F \in \Omega^2 (S^2, \u(1))$.\footnote{This is less general because it does not suffice to capture Aharanov-Bohm type phenomena, where $F=0$ and so the relevant cohomology classes for the physics stem from non-trivial $1$-cycles in the spacetime. But it will suffice for our purposes, and is at any rate much closer to how physicists intuitively think of the monopole.\label{footnoteAB}} 
We will say that an expanded theory satisfies $F$-(Locality) if and only if the force fields $F$ of any of its physically relevant global models can be constructed by taking some collection of local models---one from each $\C_T (U_i)$---whose fields gives rise to $F|_{U_i}$, and compatibly gluing these over the non-trivial overlaps $\{ U_{ij} \}$. 

It should now be evident that the structure of the morphisms in the local theories $\S_A, \E_A, \C_A$ (now viewed as functors) of Section 3 place stringent constraints on the extent to which an expanded theory can satisfy (Locality).
Without yet introducing any assumptions about what the physically relevant global solutions of a theory should be, let us consider the implications of taking each of these categories to provide a collection of local models.

First, let us consider local models coming from $\S_A$. Recall that each $\S_A(U_i)$ is a categorical set whose objects are fields $A\in \Omega^1(U_i,\u(1))$ and whose only morphisms are the identities. Thus, for any two models $A_1\in \S_A(U_1), A_2\in\S_A(U_2)$ such that $F|_{U_{12}}=dA_1 |_{U_{12}}=dA_2 |_{U_{12}}$, the restrictions $A|_{U_{ij}}, A'|_{U_{ij}}\in \overline{\S_A}(U_{ij})$, can only be compatibly glued by an identity morphism.
In other words, if we wish to use local models to construct a global force field $F$ on $M$, we must be in a situation in which the local gauge fields stem from restricting a global gauge field on $M$. 
But on any manifold with non-trivial de Rham cohomology in degree $2$ (e.g. $M=S^2$), no global gauge field can give rise to a non-trivial force field $F$.
Thus, an expanded theory that links global models to $\S_A (~\cdot~)$ can only satisfy $F$-(Locality) with respect to a trivial global model; in particular, monopoles with non-trivial topological charge cannot be represented in this way. 

On the other hand, in the case of $\E_A$ (respectively $\C_A$), there exist non-automorphism morphisms between $A|_{U_{ij}}$ and $A'|_{U_{ij}}$ in $\overline{\E_A}(U_{ij})$ (respectively $\overline{\C_A}(U_{ij})$).
Thus, these local theories have the following feature: the structure of their isomorphisms is rich enough to compatibly glue \textit{distinct} local gauge fields which nonetheless give rise to an identical force field on $U_{ij}$.
In the next section, we will consider the extent to which this feature allows expanded theories to represent all monopole models while simultaneously satisfying (Locality).

\centering
\subsection{(Richness) and (Locality) imply (Surplus*)}
\justify

Recall that our goal in this section was to explain how `surplus' structure---now precisfied as surplus* structure---can be an essential feature of gauge theory, and our strategy was to advert to a notion of `theory' that is more faithful to physical practice---and thus incorporates (Richness) and (Locality)---instead of the simplistic theories of Section 3.
In such an `expanded theory', as we earlier called it, (Richness) is a property of the global models, `surplus' (however one conceptualises it) is a property of the local models, and (Locality) is a relation between global models and local models.

Our main claim will be that conjoining (Richness) and (Locality) implies that the relevant local theory needs to satisfy (Surplus*), i.e. it should be $\C_A$.
Hence our slogan that (Richness) and (Locality) imply (Surplus*).

We begin by giving a more abstract argument for the converse of our main claim: an expanded theory that has $\C_A$ as its local theory simultaneously satisfies (Richness) and (Locality).
(Richness) tells us that our theory should include as a global model any $(P, \omega) \in Bun(M)$, and so conjoining it with (Locality) yields the requirement that for any $(P, \omega)$, there exists a collection of local models $\{ A_i \in \C_T (U_i) \}$ that can be compatibly glued to yield $(P, \omega)$.
It will be straightforward to see that this requirement is satisfied for $\C_T = \C_A$ if we note that any $(P, \omega)$ has the following `local' description: given some contractible cover $\{ U_i \}$ of $M$, $P$ can be described by patching together the bundle charts $\{ U_i \times U(1) \}$ by means of the `cocycles' $h_{ij}: U_{ij} \rightarrow U(1)$ (for any non-trivial overlap $U_{ij}$); similarly, $\omega$ can be described by patching together $\omega_i:= \omega|_{\pi^{-1} (U_i)}$, where $\pi: P \rightarrow M$ is the bundle projection map.\footnote{More precisely, the prinicpal bundle $\pi:P\to M$ is given by taking the quotient space $P:= (\amalg U_i\times U(1))/\sim$ where the equivalence relation identifies $(x,g)\in U_i\times U(1)$ with $(x,h_{ij}(x)g)\in U_j\times U(1)$. The maps $\phi:\pi^{-1}(U_i)\to U_i\times U(1)$ given by $[x_i,g]\mapsto (x_i,g)$ for $x_i\in U_i$ are the local trivializations of $P\to M$, and the maps $h_{ij}$ are the transition functions between these local trivializations.}
Let $\{ A_i \}$ be a collection of local models (i.e. gauge fields) along with some choice of gluing (i.e. gauge transformations)
\beq
A_j=A_i+g_{ij}^*\mathfrak{m}= A_i + g_{ij}^{-1} dg_{ij}
\eeq
\noindent for all $i,j$, where we have emphasized the role of the Maurer-Cartan form $\mathfrak{m}$.
To see that any $(P, \omega)$ can be reconstructed from some $\{ A_i \}$ along with a choice of gluing, we now make two observations. 
First, by setting $g_{ij}=h_{ij}$ for all $i,j$, our choice of gluing $\{ g_{ij} \}$ yields precisely the cocycle data that defines $P$ (up to isomorphism); in other words, the gauge transformations in $\C_A$ encode the structure that allows us to reconstruct any $P$.
Second, any connection $\omega$ on $P$ can be reconstructed (up to gauge transformations) by choosing $\{ A_i \}$ such that $\omega_i=\pi^*A_i +p_2^*\mathfrak{m}$, where $p_2$ is the projection map $U_i \times U(1) \rightarrow U(1)$.\footnote{For a proof of this reconstruction as well as the general equivalence between $(P,\omega)$ and a gluing of local models, see \citep[Thm 14.4]{sontz}.}

This line of reasoning also indicates why one should expect our main claim to be true.
For suppose that the relevant local theory $\C_T$ does not meet (Surplus*), i.e. it is one of $\S_{[A]}$, $\S_F$ or $\E_A$, and thus does not contain the additional morphisms (\textit{stuff}, cf. Prop. 3.2.3) contained in $\C_A$. 
Then, since the isomorphism class of $P$ is determined by its cocycle data, and since (as we have just argued) the cocycle data is contained in the morphisms of $\C_A$, it is reasonable to expect that it will not be possible to reconstruct an arbitrary $P$ from a local theory whose morphisms are less rich than those of $\C_A$. 
In fact, this is exactly what happens, and we shall now provide a more incisive argument for this conclusion in the case of the Dirac monopole, for which it will be expedient to formulate (Locality) in terms of the monopole's global force field $F$.

In the Dirac monopole case, $M=S^2$, take the open cover of $M$ to be the contractible regions $U_1=S^2\setminus\{s\}$ and $U_2=S^2\setminus\{n\}$, i.e. the sphere with the south and north pole removed, respectively. 
In this context, the conjunction of (Richness) and $F$-(Locality) amounts to the requirement that the global force field $F$ of any monopole model can be determined by compatibly gluing some pair of local models from $\C_T (U_1)$ and $\C_T(U_2)$.
We will now argue that if $\C_T$ does not have surplus* structure, then this requirement cannot be satisfied; furthermore, we will only focus on the case of $\E_A$ since, as we discussed at the end of Section \ref{loc}, $\S_A$ only contains enough information to reconstruct the trivial monopole. 

We begin by considering how the isomorphism class (of the principal bundle $P$) of an arbitrary monopole can be reconstructed from $\C_A$ by compatibly gluing a pair of local models $A_1 \in \C_A (U_1)$ and $A_2 \in \C_A (U_2)$, i.e. by means of some gauge transformation $A_1 =A_2+g_{12}^{-1}dg_{12}$, where $g_{12}: U_{12} \simeq S^1\to U(1) \cong S^1$. 
In order to see how this works, it will be especially useful to recall from Section 4.1.1 that this isomorphism class corresponds to the monopole's topological charge $n \in \mathbb{Z}$, which can in turn be computed from its global force field $F$ by means of the formula $n = \frac{1}{2\pi} \int_{S^2} F$. Then, through a judicious application of Stoke's theorem, it is easy to see that $n = \frac{1}{2\pi} \int_{S^1} (A_1 - A_2) = \frac{1}{2\pi} \int_{S^1} g_{12}^{-1} dg_{12}$.
This is nothing other than the classical formula for the winding number (i.e. the number of times $g_{12}$ winds $S^1$ around $S^1$), and so we see that the topological charge of a monopole is determined by the `winding' (or homotopy class) of the particular choice of gauge transformation in $\overline{\C_A} (U_{12})$ that is used to glue $A_1$ and $A_2$.
It follows that in order to reconstruct monopole models with an arbitrary topological charge, our local theories need to contain all possible gauge transformations (and thus all possible homotopy classes), which is precisely the surplus* structure of $\C_A$ as compared to $\S_{[A]}$.

By contrast, if we take $\E_A$ as our local theory, then there is exactly one gauge transformation (and thus one possible choice of gluing) in $\overline{\E_A} (U_{12})$ between $A_1$ and $A_2$. 
Recall from the definition of $\E_A$ that this gauge transformation is described as $A_1 = A_2 + \lambda$, where $\lambda$ is a closed $1$-form.
It thus follows that the \textit{only} monopole models that one can construct by gluing local models are ones with the topological charge $n' = \frac{1}{2\pi} \int_{S^1} \lambda $ (where $n'$ is possibly non-zero if $\lambda$ is non-exact). 
To rephrase this crucial point: if the local theory does not have surplus* structure, i.e. the full set of gauge transformations, then all the other non-trivial monopole models (i.e.those with topological charge $n \neq n'$) cannot be constructed by gluing local models; this thus concludes our argument for the main claim: `(Richness) and (Locality) imply (Surplus*)'.

It is worth making two quick observations about this argument. First, we note that the argument holds up to categorical equivalence of the relevant local theories. In other words, the conclusion of the argument (it is not possible to simultaneously satisfy (Richness) and (Locality)) would have been the same if we had used $\S_F$ or $\S_{[A]}$ as our local theories instead of $\E_A$; and it would have been the same if we had used sk$\C_A$ instead of $\C_A$. 
Second, we note that one could of course advocate taking $\E_A$ as the relevant local theory if one were willing to give up either (Richness) or (Locality).
In other words, one could hold that the other monopole models are either physically irrelevant, or that they are non-local in the sense that they cannot be determined by the data of local theories. 
But by our lights, such a position is not only awkward but difficult to square with both physical practice and our intuitive understanding of fields as `local' objects.

Let us now summarise our explanation of the claim that `surplus* structure is an essential feature of gauge theory'.
The apparent tension in this statement stems from the idea that the surplus* structure is redundant structure and thus cannot be essential.
Our resolution of this tension turns on our above distinction between two different notions of `theory'. 
With respect to the first and limited notion, which we called a `local theory', surplus* structure is indeed redundant.
However, with respect to the second richer and more realistic notion, which we called an `expanded theory', such structure is necessary in order for the theory to perform the representational function jointly required by (Richness) and (Locality).
Thus, the tension only arises if one mistakenly tries to shoehorn gauge theory into the mould of the first narrow notion.

\centering
\section{Conclusion}
\label{conc}
\justify

Recall the motivation for the project of using category theory to represent scientific theories: a theory's content is not exhausted by its collection of models; it includes information about the relationships \textit{between} models. 
Category theory is a fruitful tool because it provides a natural framework in which to explicitly represent this information.
For the most part, the philosophical literature has only implemented this idea by treating a theory as a category, i.e. its models are represented as objects, and its inter-model relationships are encoded in morphims between objects.
For instance, in the categories of Section 3, the relationship of representational equivalence between models (gauge fields) is encoded in the isomorphisms (gauge transformations) of the category.

The argument of our paper can be seen as stemming from two related observations.
First, as we explained in Section 3, one important notion of `surplus' in gauge theory applies to theories in which gauge fields (corresponding to the same $F$) are taken to be representationally equivalent, and the gauge transformations are themselves included as part of the putative redundancies. We also showed that one can give a categorical analysis of this notion, which we called `surplus*'.  
Second, representational equivalence is only one kind of inter-model relationship; in order to capture other relationships such as global-local relationships, it is natural to move to a `theories as functors' framework, within which the `theories as categories' framework can be embedded.
In the context of gauge theory, such a move can be further motivated by the fact that field theories are typically taken to possess `locality', a property that can be precisely represented within the `theories as functors' framework.
As we argued above, these two observations are related by the conventional wisdom that there is a tight conceptual link between the `surplus' of gauge theory and representing the `locality' of fields. 
Thus, an immediate pay-off of the `theories as functors' framework is that it allows us to relate gauge theories on various spacetimes in order to address the following questions: (1) What is the correct analysis of `surplus' for the `theories as categories' of Section 3?, and (2) What is the representational role of this `surplus' in gauge theory more generally?
In Section 4, we used this relation to provide an independent check on our answer to (1), namely (Surplus*), thus confirming that our analysis really is responsive to desiderata stemming from physical practice, namely the need to simultaneously satisfy (Richness) and (Locality).

From a broader perspective that concerns the representational content of scientific theories, our argument here provides more evidence for the benefit of thinking about scientific theories in category theoretic terms. As we have seen, although it is the objects of the categories (the models) which directly represent possible systems, the morphisms can also feature in the representational content of $U(1)$ gauge theory. 
In particular: they play a role in representing how subsystems represented by objects of the theories-as-categories of Section 3 can be composed in the theories-as-functors framework. 
Consider, for example, the comparison between $\C_A$ and $\E_A$. The objects in these categories are the same, so it is only the morphisms, in particular the non-trivial automorphisms (and morphisms generated by these), which distinguish the two categories. 
And yet as our argument in the previous section showed, because of $\C_A$'s ability---and $\E_A$'s inability---to adequately represent how the local subsystems compose to form global systems (in a way that satisfies (Richness) and (Locality)), the theories have inequivalent representational content. 

To conclude, we wish to point out a striking parallel between our discussion of how scientific theories should be represented---which involves classifying models/theories up to some standard of sameness---and a far-reaching approach to classification that has been developed within pure mathematics (especially algebraic geometry). 
The central concept here is that of a `moduli space' $\mathcal{M}$ of objects (cf. our local models), i.e. a space whose points are in 1-1 correspondence with isomorphism classes of such objects (thus classifying them in a naive sense), and which is in addition rich enough to encode the structure of \textit{families} of such objects over a base space $B$ (cf. our families of local models parameterised by a contractible cover of $M$) in the sense that such families will be in 1-1-correspondence with maps from $B$ to $\mathcal{M}$.\footnote{For a history of the development of the `moduli space' concept, see \citep{Ji2016}, and for an elementary introduction that stresses the notion of `classification', see \citep{BenZvi}.}
Despite the fruitfulness of this notion, it was recognised very early on that such moduli spaces may not exist if the objects possess non-trivial automorphisms (cf. the non-trivial automorphisms of objects in $\C_A$), because the automorphisms will quite generically lead to various non-trivial families of objects over $B$, which will not be classified by maps from $B$ into a candidate $\mathcal{M}$.
Two standard strategies for addressing this problem are: (i) `rigidification', i.e. simply omitting the non-trivial automorphisms from the objects; and (ii) defining a richer structure called a `moduli stack', in which one does not merely assign (categorical) sets of objects to regions of the base space, but rather a groupoid that includes information about the non-trivial automorphisms of the objects, and thus allows one to keep track of the various non-trivial ways in which they `glue' into families of objects.

It will be clear from this heuristic discussion that taking $\E_A$ as one's local theory constitutes a rigidification of the models of gauge theory, whereas taking $\C_A$ as one's local theory is a key step in passing to the richer framework of moduli stacks---thus, from a broader mathematical perspective, one way of phrasing the moral of our philosophical discussion is that, in gauge theory, there are very strong grounds for thinking of a collection of models as a `stack'.\footnote{For a detailed presentation of how to describe a stack of gauge fields, see \citep{Benini2017}.}
For all that, there is still a disanalogy between these pure mathematical strategies and our physically-motivated discussion. 
In the mathematical case, to rigidify the objects is merely to simplify the problem so that one obtains a well-behaved moduli space.
By contrast, in our case, to rigidify the models is to say that one does not care either about representing (Richness), or about representing (Locality).

\centering
\section*{Appendix}
\label{appendix}
\justify
\addcontentsline{toc}{section}{\nameref{appendix}}
\G*

\begin{proof}
To see that $G$ is faithful notice that for all $A$ in $\S_A$, hom$(A,A)=\{1_A\}$, so the map $G_{A \to A}:$ hom$(A, A) \to$ hom$(G(A), G(A))$ has to be injective. For all $A, A'$ in $\S_A$ such that $A \neq A'$, hom$(A, A') = \varnothing$, and thus $G_{A \to A'}$ is also injective. So $G$ is faithful. 
To see that $G$ is essentially surjective notice that every $[A]$ in $\S_{[A]}$ is simply an equivalence class of gauge fields, and as such contains at least one $A$ such that $A$ is in $\S_A$ and $G(A)=[A]$.
To see that $G$ fails to be full consider an equivalence class of gauge fields $[A]$ in $\S_{[A]}$ that contains distinct gauge fields $A, A'$ (which are in $\S_A$). Since $\S_A$ contains only the identity morphisms, hom$(A,A') = \varnothing$. 
But, $G(A)=G(A')=[A]$, so hom$(G(A), G(A')) = \{1_{[A]}\}$. So the map $G_{A \to A'}$ is not surjective.\footnote{Notice that, since $\S_F \cong \S_{[A]}$, $G$ also shows that $\S_A$ has more structure* than $\S_F$. The proof of Proposition 3.2.1 follows the proof of Proposition 1 in \citep[p. 1044-1045]{W2}. This should make clear how our $\S_{[A]} \cong \S_F$ and $\S_A$ are analogues of his \textbf{EM$_1$} and \textbf{EM$_2$} respectively.} 
\end{proof}

\K*

\begin{proof}
$K$ is faithful for an analogous reason to $G$: for every object $[A]$ in $\S_{[A]}$, hom$([A], [A]) = \{1_{[A]}\}$ and every pair of distinct objects $[A]$ and $[A]'$, hom$([A], [A]') = \varnothing$. Thus both $K_{[A] \to [A]}$ and $K_{[A] \to [A]'}$ have to be injective.
To see that $K$ is essentially surjective notice that every $A$ in $\C_A$ is in some equivalence class $[A]$ in $\S_{[A]}$ and is isomorphic to every $A'$ which is also in that equivalence class. So either $K([A]) = A$, or $K([A]) = A'$ such that $A \cong A'$ in $\C_A$. This suffices to show that $K$ is essentially surjective.
To see that $K$ is not full consider an arbitrary $[A]$ in $\mathcal{S}_{[A]}$. By construction hom$([A], [A]) = \{1_{[A]}\}$. But, by the definition of the morphisms in $\C_{A}$ (given by definition (\ref{u1trans})), hom$(K([A]),K([A]))$ contains automorphisms which are not $1_{K([A])}$. \sloppy So the map $K_{[A] \rightarrow [A]}$ is not surjective.
\fussy
\end{proof}

\T*

\begin{proof}
$\tau$ is essentially surjective for the same reason as $G$.
To see that $\tau$ is full we consider an arbitrary pair of objects $A, A'$ in $\C_A$. If $A \ncong A'$ in $\C_A$ then $\tau(A) \neq \tau(A')$, in which case $\tau_{A \to A'}: \varnothing \to \varnothing$. If either $A=A'$, or $A \cong A'$ in $\C_A$, then $\tau(A) = \tau (A')$, so hom$(\tau(A), \tau(A')) = \{1_{\tau(A)}\}$. So in each case the map $\tau_{A \to A'}$ is surjective. 
To see $\tau$ is not faithful consider an arbitrary object $A$ in $\C_{A}$. By definition (\ref{u1trans}), hom$(A, A)$ contains automorphisms which are not $1_A$, and by the definition of $\mathcal{S}_{[A]}$, hom$(\tau (A), \tau (A)) = \{1_{\tau (A)}\}$. So the map $\tau _{A \rightarrow A}$ is not injective. 
\end{proof}

\M*

\begin{proof}
To see that $M$ is full and faithful note that both the domain and codomain categories contain only identity morphisms. If $[A] \neq [A]' $ then $M_{[A] \to [A]'}: \varnothing \to \varnothing$ and if $[A] = [A]'$ then $M_{[A] \to [A]}: \{1_{[A]}\} \to \{1_{M([A])}\}$; therefore, for every $[A], [A]'$ in $S_{[A]}$, $M_{[A] \to [A]'}$ is a bijection. To see $M$ is not essentially surjective note that by the definition of $M$, for each $[A]$ in $\mathcal{S}_{[A]}$, $M([A])=$ some $A$ in $\S_{A}$ such that $A \in [A]$. 
By construction there is also an $A' \neq A$ in $\S_{A}$ such that $A' \in [A]$, but hom$(A, A') = \varnothing$. 
Furthermore, if one takes a distinct equivalence class $[A^*]$ in $\S_{[A]}$ where $A'\notin [A^*]$, then $M([A^*])\ncong A'$ in $\S_A$. 
So $A'$ is not isomorphic to anything in the image of $M$.\\

\end{proof}

\centering

\section*{Acknowledgements}
\justify

Authors are listed in alphabetical order, the paper is fully collaborative. We are grateful to two anonymous referees for useful comments. We would also like to thank Jeremy Butterfield, Bryan Roberts, Alexander Schenkel, Christopher Schommer-Pries, and Stephan Stolz for helpful discussions and comments on previous drafts. Thanks also to audiences at the following workshops and conferences: \textit{The Physics Old and New: Historical and Critical Perspectives on Physics and the Philosophy of Nature} at the University of Notre Dame, \textit{Symmetries and Asymmetries in Physics} at Leibniz University in Hannover, and the \textit{Ninth European Congress of Analytic Philosophy} in Munich.

\begin{flushright}
\textit{
James Nguyen\\
History and Philosophy of Science\\ University of Notre Dame\\
Notre Dame, USA\\}
and\\
\textit{Centre for Philosophy of Natural and Social Science\\
London School of Economics and Political Science\\
London, UK\\
jnguyen4@nd.edu\\
\vspace{5mm}
Nicholas J. Teh\\
Department of Philosophy\\
University of Notre Dame\\
Notre Dame, USA\\
nteh@nd.edu\\
\vspace{5mm}
Laura Wells\\
Department of Mathematics\\
University of Notre Dame\\
Notre Dame, USA\\
lwells@nd.edu
}
\end{flushright}

\bibliographystyle{apalike}
\bibliography{archive}

\begin{thebibliography}{}

\bibitem[A'Campo et~al., 2016]{Ji2016}
A'Campo, N., Ji, L., and Papadopoulos, A. (2016).
\newblock On the early history of moduli and teichm{\"u}ller spaces,
  https://arxiv.org/abs/1602.07208.

\bibitem[Alvarez-Gaume and Vazquez-Mozo, 2011]{AG2011}
Alvarez-Gaume, L. and Vazquez-Mozo, M.~A. (2011).
\newblock {\em {An Invitation to Quantum Field Theory}}.
\newblock Lecture Notes in Physics. Springer, Berlin.

\bibitem[Baez et~al., 2004]{baez04}
Baez, J., Bartel, T., and Dolan, J. (2004).
\newblock Property, structure, and stuff,.
\newblock Quantum Gravity Seminar, University of California, Riverside,
  http://math.ucr.edu/home/baez/qg-spring2004/discussion.html.

\bibitem[Baez and Shulman, 2007]{baez05}
Baez, J. and Shulman, M. (2007).
\newblock Lectures on n-categories and cohomology,
  https://arxiv.org/abs/math/0608420.

\bibitem[Barnet-Lamb, 2005]{bl05}
Barnet-Lamb, T. (2005).
\newblock Minor thesis: from fibered categories to algebraic stacks,
  http://people.brandeis.edu/~tbl/minor-thesis.pdf.

\bibitem[Belot et~al., 2009]{PittGauge}
Belot, G., Earman, J., Healey, R., Maudlin, T., Nounou, A., and Struyve, W.
  (2009).
\newblock Synopsis and discussion: Philosophy of gauge theory,
  http://philsci-archive.pitt.edu/4728/.

\bibitem[Ben-Zvi, 2008]{BenZvi}
Ben-Zvi, D. (2008).
\newblock Moduli spaces.
\newblock In Gowers, T., Barrow-Green, J., and Leader, I., editors, {\em The
  Princeton Companion to Mathematics}, pages 408--419. Princeton University
  Press.

\bibitem[Benini et~al., 2017]{Benini2017}
Benini, M., Schenkel, A., and Schreiber, U. (2017).
\newblock {The stack of Yang-Mills fields on Lorentzian manifolds,
  https://arxiv.org/abs/1704.01378}.

\bibitem[Benini et~al., 2015]{BSS}
Benini, M., Schenkel, A., and Szabo, R.~J. (2015).
\newblock Homotopy colimits and global observables in abelian gauge theory.
\newblock {\em Lett. Math. Phys.}, 105:1193--1222.

\bibitem[Connes, 1994]{Connes1994}
Connes, A. (1994).
\newblock {\em Noncommutative Geometry}.
\newblock Academic Press, San Diego, CA.

\bibitem[Dougherty, ming]{dougherty16}
Dougherty, J. ({forthcoming}).
\newblock Separability in gauge theories.
\newblock {\em Philosophy of Science}.

\bibitem[Earman, 2004]{Earman2004}
Earman, J. (2004).
\newblock Laws, symmetry, and symmetry breaking: Invariance, conservation
  principles, and objectivity.
\newblock {\em Philosophy of Science}, 71(5):1227--1241.

\bibitem[Frigg and Nguyen, 2016]{FriggNguyen16}
Frigg, R. and Nguyen, J. (2016).
\newblock Scientific representation.
\newblock In Zalta, E.~N., editor, {\em The Stanford Encyclopedia of
  Philosophy}. Metaphysics Research Lab, Stanford University, winter 2016
  edition.

\bibitem[Grothendieck, 5960]{Grothendieck1959}
Grothendieck, A. (1959/60).
\newblock Descent technique and theorems of existence in algebraic geometry. i:
  G \ 'in general. descent by flat morphisms. (technique of descent and
  existence theorems in algebraic geometry).
\newblock {\em Sem. Bourbaki 12 (1959/60), No. 190, 29 p. (1960).}

\bibitem[Halvorson, 2012]{halvorson2012scientific}
Halvorson, H. (2012).
\newblock What scientific theories could not be.
\newblock {\em Philosophy of Science}, 79(2):183--206.

\bibitem[Halvorson and Tsementzis, 2015]{halvorson2015categories}
Halvorson, H. and Tsementzis, D. (2015).
\newblock Categories of scientific theories,
  http://philsci-archive.pitt.edu/11923/.

\bibitem[Healey, 2007]{Healey2007}
Healey, R. (2007).
\newblock {\em Gauging What's Real}.
\newblock Oxford University Press.

\bibitem[Lurie, 2009]{lurie}
Lurie, J. (2009).
\newblock {\em Higher topos theory}, volume 170 of {\em Annals of Mathematics
  Studies}.
\newblock Princeton University Press, Princeton, NJ.

\bibitem[Manton and Sutcliffe, 2004]{Manton2004}
Manton, N. and Sutcliffe, P. (2004).
\newblock {\em {Topological Solitons}}.
\newblock Cambridge monographs on mathematical physics. Cambridge University
  Press, Cambridge.

\bibitem[Pokorski, 2000]{pokorski20}
Pokorski, S. (2000).
\newblock {\em Gauge Field Theories}.
\newblock Cambridge Monographs on Mathematical Physics. Cambridge University
  Press, 2nd edition.

\bibitem[Redhead, 2003]{Redhead2003}
Redhead, M. (2003).
\newblock The interpretation of gauge symmetry.
\newblock In Brading, K.~A. and Castellani, E., editors, {\em Symmetries in
  Physics: Philosophical Reflections}, pages 124--139. Cambridge University
  Press.

\bibitem[Riehl, 2016]{riehl16}
Riehl, E. (2016).
\newblock {\em Category theory in context}.
\newblock Courier Dover Publications, Mineola, New York.

\bibitem[Schreiber, 2016]{Schreiber:2016pxa}
Schreiber, U. (2016).
\newblock {Higher prequantum geometry, https://arxiv.org/abs/1601.05956}.

\bibitem[Sontz, 2015]{sontz}
Sontz, S. (2015).
\newblock {\em Principal Bundles: The Classical Case}.
\newblock Springer International Publishing, Cham, Heidelberg, New York,
  Dordrecht, and London.

\bibitem[Steenrod, 1951]{Steenrod}
Steenrod, N. (1951).
\newblock {\em The topology of fibre bundles}.
\newblock Princeton Mathematical Series 14. Princeton University Press.

\bibitem[Sundermeyer, 2014]{sundermeyer14}
Sundermeyer, K. (2014).
\newblock {\em Symmetries in Fundamental Physics}.
\newblock Fundamental Theories of Physics. Springer International Publishing,
  Cham, Heidelberg, New York, Dordrecht, and London.

\bibitem[Teh, ming]{Teh2017}
Teh, N.~J. (forthcoming).
\newblock Gauge theory.
\newblock In Knox, E. and Wilson, A., editors, {\em The Routledge Companion to
  the Philosophy of Physics}. Routledge.

\bibitem[Teh and Tsementzis, 2017]{tt2016}
Teh, N.~J. and Tsementzis, D. (2017).
\newblock Theoretical equivalence in classical mechanics and its relationship
  to duality.
\newblock {\em Studies in History and Philosophy of Science Part B: Studies in
  History and Philosophy of Modern Physics}, 59(Supplement C):44--54.

\bibitem[Tong, 2006]{tong06}
Tong, D. (2006).
\newblock Quantum field theory.
\newblock Lecture notes, University of Cambridge Part III Mathematical Tripos.

\bibitem[Weatherall, 2016a]{W1}
Weatherall, J.~O. (2016a).
\newblock Are {N}ewtonian gravitation and geometrized {N}ewtonian gravitation
  theoretically equivalent?
\newblock {\em Erkenntnis}, 81(5):1073--1091.

\bibitem[Weatherall, 2016b]{W2}
Weatherall, J.~O. (2016b).
\newblock Understanding gauge.
\newblock {\em Philosophy of Science}, 83(5):1039--1049.

\bibitem[{Witten}, 2010]{Witten2010}
{Witten}, E. (2010).
\newblock {Mirror Symmetry, Hitchin's Equations, And Langlands Duality}.
\newblock In {Freed}, D.~S., {Hopkins}, M.~J., and {Teleman}, C., editors, {\em
  The Many Facets of Geometry: A Tribute to Nigel Hitchin}, pages 113--128.
  Oxford University Press.

\end{thebibliography}

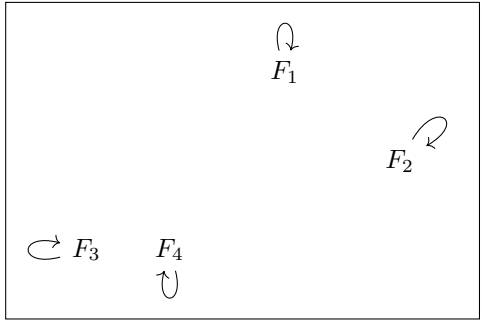
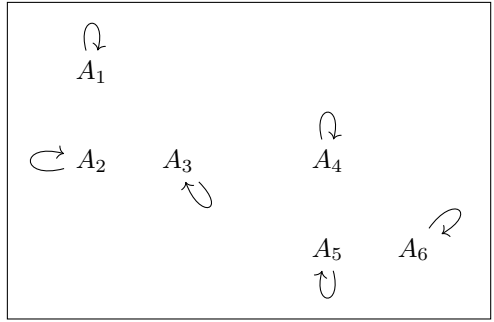
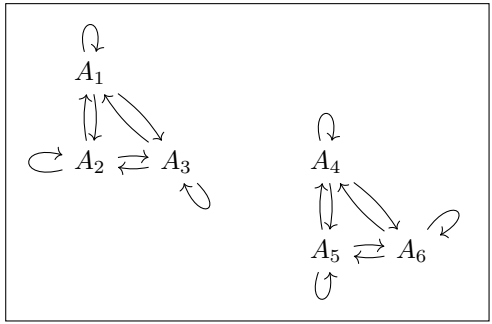
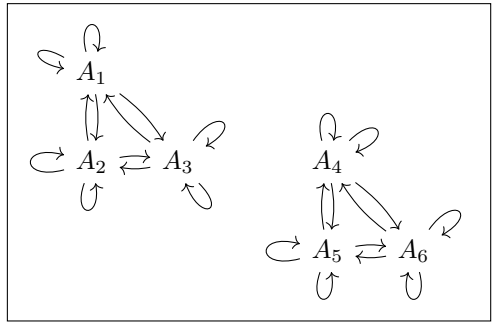
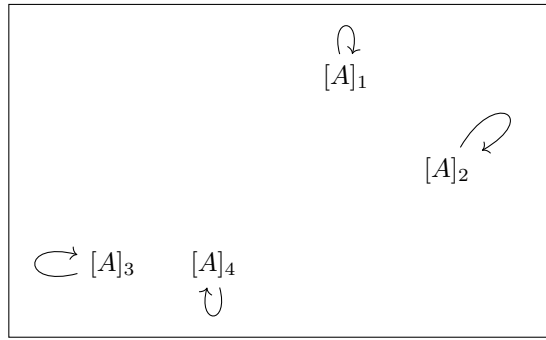
\begin{figure}[H]

    \centering
  
  \begin{subfigure}[b]{0.49\textwidth}
 
      \begin{tikzcd}[column sep=small, framed]
& &&F_1 \arrow[loop above] \\
&&&&& F_2 \arrow[in=30,out=60,loop] \\
F_3 \arrow[loop left] & F_4 \arrow[loop below]
\end{tikzcd}
       \caption{$\mathcal{S}_F$}
    \end{subfigure}
    \hfill
        \begin{subfigure}[b]{0.49\textwidth}
  \begin{tikzcd}[column sep=small,framed]
A_1 \arrow[loop above] \\
A_2  \arrow[loop left] & A_3  \arrow[in=290, out=315, loop] &&& A_4 \arrow[loop above]\\
&&&& A_5 \arrow[loop below]& A_6 \arrow[in=30, out=55, loop] 
\end{tikzcd}
       \caption{$\S_{A}$}
    \end{subfigure}

    \begin{subfigure}[b]{0.49\textwidth}
  \begin{tikzcd}[column sep=small,framed]
A_1  \arrow[loop above] \arrow[d, bend right=10, leftarrow] \arrow[d, bend left=10, rightarrow] \arrow[dr, bend right=10, leftarrow] \arrow[dr, bend left=10, rightarrow]\\
A_2 \arrow[r, bend right=10, leftarrow] \arrow[r, bend left=10, rightarrow] \arrow[loop left] & A_3 \arrow[in=290, out=315, loop]  &&& A_4 \arrow[loop above] \arrow[d, bend right=10, leftarrow] \arrow[d, bend left=10, rightarrow] \arrow[dr, bend right=10, leftarrow] \arrow[dr, bend left=10, rightarrow]\\
&&&& A_5 \arrow[r, bend right=10, leftarrow] \arrow[r, bend left=10, rightarrow]  \arrow[in=280,out=250,loop]& A_6 \arrow[in=30, out=55, loop] 
\end{tikzcd}
  \caption{$\mathcal{E}_{A}$}
    \end{subfigure}
    \hfill
        \begin{subfigure}[b]{0.49\textwidth}
      \begin{tikzcd}[column sep=small,framed]
A_1 \arrow[loop above]
\arrow[in=170,out=150,loop] \arrow[d, leftarrow, bend right=10] \arrow[d, rightarrow, bend left=10]\arrow[dr, leftarrow, bend right=10] \arrow[dr, rightarrow, bend left=10]\\
A_2 \arrow[r, leftarrow, bend right=10] \arrow[r, rightarrow, bend left=10] \arrow[loop left] \arrow[in=280,out=250,loop] & A_3 \arrow[in=30, out=55, loop] \arrow[in=290, out=315, loop] &&& A_4 \arrow[loop above] \arrow[in=20,out=45,loop] \arrow[d, leftarrow, bend right=10] \arrow[d, rightarrow, bend left=10] \arrow[dr, leftarrow, bend right=10] \arrow[dr, rightarrow, bend left=10]\\
&&&& A_5 \arrow[r, leftarrow, bend right=10]\arrow[r, rightarrow, bend left=10]  \arrow[loop left] \arrow[in=280,out=250,loop] & A_6 \arrow[in=30, out=55, loop] \arrow[loop below]
\end{tikzcd}
  \caption{$\C_{A}$}
    \end{subfigure}

       \begin{subfigure}[b]{0.49\textwidth}
      \begin{tikzcd}[column sep=small, framed]
& &&\lbrack A \rbrack_1 \arrow[loop above] \\
&&&& \lbrack A \rbrack_2 \arrow[in=30,out=60,loop] \\
\lbrack A \rbrack_3 \arrow[loop left] & \lbrack A \rbrack_4 \arrow[loop below]
\end{tikzcd}
       \caption{$\mathcal{S}_{[A]}$}
    \end{subfigure}
   
    \caption{Categories of Gauge Theories}\label{fig:cats}
\end{figure}

\begin{figure}[H]
\centering

\begin{tikzcd}[column sep=huge, row sep=huge]
& \mathcal{S}_F  \arrow[d, bend left=20, "H^{-1}"] & & \\
\C_{A}  \arrow[r, bend left=10, "\tau"] 
& \mathcal{S}_{[A]} \arrow[d, phantom, "\simeq" description]  \arrow[u, phantom, "\cong" description]  \arrow[d, bend left=20
, "I"] \arrow[u, bend left=20, "H"] \arrow[r, bend left=10, "M"] \arrow[l, bend left=10, "K"]
& \S_{A} \arrow[l, bend left=10, "G"]   \\
& \mathcal{E}_{A}  \arrow[u, bend left=20, "J"] &
\end{tikzcd}
\caption{Functorial Relationships\label{fig:rels}}

\end{figure}
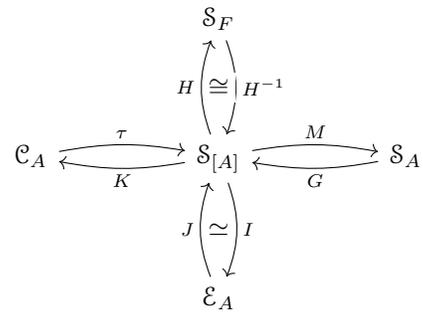

\end{document}